\definecolor{lightergray}{rgb}{0.86, 0.86, 0.86}
\newif\ifDoubleBlind
\newif\ifFull
\newtheorem{eccreduction}{ECC Reduction}
\newtheorem{vccreduction}{VCC Reduction}
\title{Solving Edge Clique Cover Exactly via Synergistic Data Reduction}
\author{Anthony Hevia}{Hamilton College, Clinton, NY, USA}{}{https://orcid.org/0009-0003-2049-954X}{}
\author{Benjamin Kallus}{Dartmouth College, Hanover, NH, USA}{benjamin.p.kallus.gr@dartmouth.edu}{}{}
\author{Summer McClintic}{Hamilton College, Clinton, NY, USA}{}{https://orcid.org/0009-0007-3088-5451}{}
\author{Samantha Reisner}{Hamilton College, Clinton, NY, USA}{}{}{}
\author{Darren Strash\footnote{Corresponding author.}}{Hamilton College, Clinton, NY, USA}{dstrash@hamilton.edu}{https://orcid.org/0000-0001-7095-8749}{}
\author{Johnathan Wilson}{Hamilton College, Clinton, NY, USA}{}{https://orcid.org/0009-0006-2992-8840}{}
\authorrunning{A. Hevia, B. Kallus, S. McClintic, S. Reisner, D. Strash, J. Wilson} 
\keywords{Edge clique cover, Vertex clique cover, Data reduction, Degeneracy}
\begin{document}

\hideLIPIcs

\maketitle

\begin{abstract}
The edge clique cover (ECC) problem---where the goal is to find a minimum cardinality set of cliques that cover all the edges of a graph---is a classic NP-hard problem that has received much attention from both the theoretical and experimental algorithms communities. While small sparse graphs can be solved exactly via the branch-and-reduce algorithm of Gramm et al. [JEA 2009], larger instances can currently only be solved inexactly using heuristics with unknown overall solution quality.
We revisit computing minimum ECCs exactly in practice by combining data reduction for both the ECC \emph{and} vertex clique cover (VCC) problems. We do so by modifying the polynomial-time reduction of Kou et al. [Commun. ACM 1978] to transform a reduced ECC instance to a VCC instance; alternatively, we show it is possible to ``lift'' some VCC reductions to the ECC problem. 
Our experiments show that combining data reduction for both problems (which we call \emph{synergistic data reduction}) enables finding exact minimum ECCs orders of magnitude faster than the technique of Gramm et al., and allows solving large sparse graphs on up to millions of vertices and edges that have never before been solved. With these new exact solutions, we evaluate the quality of recent heuristic algorithms on large instances for the first time. The most recent of these, \textsf{EO-ECC} by Abdullah et al. [ICCS 2022], solves 8 of the 27 instances for which we have exact solutions. It is our hope that our strategy rallies researchers to seek improved algorithms for the ECC problem.
\end{abstract}

\clearpage



\section{Introduction}
In the \emph{edge clique cover (ECC) problem}, also called the \emph{clique cover} problem, we are given an unweighted, undirected, simple graph $G=(V,E)$ and asked to find a minimum cardinality set of cliques that cover the edges of $G$. The ECC problem is NP-hard, however its decision variant did not appear in Karp's original list of NP-complete problems~\cite{karp-1972}, though the \emph{vertex clique cover (VCC) problem} did. Compared to the VCC problem, the ECC problem has received the lion's share of attention from researchers, in part because it has many applications. For instance, edge clique covers can be used to succinctly represent constraints for integer program solvers~\cite{atamturk-2000} and to detect communities in networks~\cite{conte-2020}. 

Data reduction rules, which allow one to transform an input instance to a smaller equivalent instance of the same problem, are powerful tools for solving NP-hard problems in practice~\cite{akiba-tcs-2016,lamm2017finding}. Of particular interest in the field of parameterized algorithms is whether the repeated application of data reduction rules produces a \emph{kernel}---which is a problem instance that has size bounded by a function $O(f(k))$ of some parameter $k$ of the input. Gramm et al.~\cite{gramm-2009} show that repeated application of four simple reduction rules produce a kernel of size $2^k$, where the parameter $k$ is the number of cliques in the cover.  When intermixed with branch-and-bound (a so-called \emph{branch-and-reduce} algorithm), these reduction rules enable solving sparse graphs of up to 10,000 vertices quickly in practice. Since their seminal work, no progress has been made on solving larger instances exactly. Indeed, the prospect of doing so is grim since polynomial kernels are unlikely to exist for the ECC problem, when parameterized on the solution size~\cite{cygan-2014}. Although researchers have found further FPT algorithms (and smaller kernels) with other parameters~\cite{blanchette-2012,ullah-2022}, these algorithms are still only able to solve relatively small instances in practice. The outlook for the VCC problem is even worse in theory: it is unlikely to have any problem kernel when parameterized on the number of cliques $k$ in the cover, as it is already NP-hard for $k=3$ (since it is equivalent to 3-coloring the complement graph).

However, recent data reductions for the \emph{VCC} problem have been shown to significantly accelerate computing minimum VCCs exactly in practice. Strash and Thompson~\cite{strash-2022} introduce a suite of reduction rules and show that data reduction can solve real-world sparse graphs with up to millions of vertices in seconds. 

\paragraph*{Our Results} We show that combining VCC and ECC data reductions enables the ECC problem to be solved exactly on large instances not previously solvable by Gramm et al.~\cite{gramm-2009}. We do so by modifying the polynomial-time transformation of Kou et al.~\cite{10.1145/359340.359346} to transform a reduced ECC instance to a VCC instance, but also show that some VCC reductions can be ``lifted'' to ECC reductions. Their combined reduction power (which we call \emph{synergistic data reduction}) reduces an ECC instance significantly more than Gramm et al.'s reductions alone, enabling us to exactly solve graphs with millions of vertices and edges. With these exact results, we objectively evaluate the quality of heuristic algorithms recently introduced in the literature. On instances not solvable exactly with our method, we give upper and lower bounds for use by future researchers. 

\section{Related Work}
\label{sec:related-work}

 We now briefly review the relevant previous work on the ECC and VCC problems, as well as practical data reduction in related problems.
 \subsection{Edge Clique Cover}

The goal of the edge clique cover (ECC) problem is to cover the edges of the graph $G$ with a minimum number of cliques, denoted $\theta_E(G)$. That is, to find a set of cliques $\mathcal{C} = \{C_1, C_2,\ldots, C_k\}$ such that each edge is in at least one clique in $\mathcal{C}$ and $k=\theta_E(G)$. Although closely related to the VCC problem (to cover \emph{vertices} with a minimum number of cliques, denoted $\theta(G)$), Brigham and Dutton~\cite{brigham-1983} showed that $\theta(G) \leq\theta_E(G)$, and that these cover numbers can differ significantly: $\theta_E(G)$ can be as large as $\theta(G)(n - \theta(G))$. Gramm et al.~\cite{gramm-2009} introduced four data reductions for the ECC problem, which they show can solve real-world sparse graphs of hundreds of vertices, as well as synthetic instances on up to 10K vertices in practice, when interleaved with branch and bound. Furthermore, they showed that their data reductions produce a kernel of size $2^k$, where $k$ is the number of cliques. Cygan et al.~\cite{cygan-2014} showed that it is unlikely that a polynomial-size kernel exists when parameterized by the number of cliques in the cover, as otherwise the polynomial hierarchy collapses to its third level.  
However, Blanchette et al.~\cite{blanchette-2012} gave a linear-time algorithm having running time $O(2^{\binom{k}{2}}n)$ where $k$ is the treewidth of the graph. In practice, their algorithm is effective on graphs with hundreds of vertices and small treewidth.
For larger graphs, heuristic methods are used to compute inexact ECCs~\cite{conte-2020,abdullah-2021,abdullah-2022} in practice. No heuristic algorithm performs best on all instances, and their overall quality is unclear.

\subsection{Vertex Clique Cover}
The vertex clique cover (VCC) problem is NP-hard, and closely related to the maximum independent set and graph coloring problems. The size of a minimum VCC (also called the clique cover number) $\theta(G)$ is lower bounded by the size of a maximum independent set (the independence number $\alpha(G)$) and equivalent to the chromatic number of the complement graph, $\chi(\overline{G})$. There is a rich line of research on the graph coloring problem, which seeks to compute the chromatic number; many of the theoretical results for the VCC problem come via the graph coloring problem. The fastest exact exponential-space algorithm for computing the chromatic number on an $n$-vertex graph has time $O^*(2^n)$ (where $O^*$ hides polynomial factors) using a generalization of the exclusion-inclusion principle~\cite{koivisto2006an}, and in polynomial space the problem can be solved in time $O(2.2356^n)$~\cite{gaspers2017faster}. Furthermore, there exists no polynomial-time algorithm with approximation ratio better than $n^{1-\epsilon}$ for $\epsilon>0$ unless $P=NP$~\cite{zuckerman2007linear}.

In terms of data reduction, we note that it is unlikely that a kernel exists when parameterized on the (vertex) clique cover number. Deciding if a cover with even $3$ cliques exists is NP-complete (since $3$-coloring the complement is NP-hard). A polynomial kernel would have size $O(1)$ and could be computed in polynomial time. Solving the kernel with brute-force computation would solve the VCC problem in polynomial time, implying $P=NP$. However, in practice, the VCC problem can be solved on large, sparse real-world graphs using the data reductions by Strash and Thompson~\cite{strash-2022}.

\subsection{Data Reduction in Practice for Related Problems}
\label{sec:other}
Other classical NP-hard problems have large suites of data reductions that are effective in practice, including minimum vertex cover~\cite{akiba-tcs-2016,fellows2018known}, maximum cut~\cite{ferizovic-2020}, and cluster editing~\cite{blasius-2022}. Popular data reductions include variations of simplicial vertex removal, degree-2 folding, twin, domination, unconfined, packing, crown, and linear-programming-relaxation-based reductions~\cite{akiba-tcs-2016}. Even the simplest reductions can be highly effective when combined with other techniques~\cite{chang2017,strash-power-2016}. Data reductions are most effective in sparse graphs, which are the graphs that we consider here. Finally, similar to what we propose here, other NP-hard problems are solved by first applying a problem transformation. In particular, algorithms for minimum dominating set problem first transform the problem to an instance of the set cover problem~\cite{vanrooij-2011}.

\section{Preliminaries}

We consider a simple finite undirected graph $G = (V, E)$ with vertex set $V$ and edge set $E \subseteq \{\{u,v\}\mid u,v\in V\}$. For brevity, we denote by $n=|V|$ and $m=|E|$ the number of vertices and edges in the graph, respectively. When more specificity is needed, we denote the vertex and edge set of a graph $G$ by $V(G)$ and $E(G)$ respectively. We say two vertices $u,v \in V$ are \emph{adjacent} (or \emph{neighbors}) when $\{u,v\} \in E$. The \emph{open neighborhood} of a vertex $v \in V$ is the set of its neighbors $N(v) := \{u \mid \{u, v\} \in E\}$, and the \emph{degree} of $v$ is $|N(v)|$. We further define the \emph{closed neighborhood} of a vertex $v \in V$ to be $N[v] := N(v) \cup \{v\}$. Extending these definitions, the open neighborhood of a set $A \subseteq V$ is $N(A):=\bigcup_{v \in A} N(v)\setminus A$ and the closed neighborhood of $A$ is $N[A] := \bigcup_{v \in A} N[v]$. The subgraph of $G$ induced by a vertex set $V'\subseteq V$, denoted $G[V']$, has vertex set $V'$ and edge set $E'=\{\{u,v\}\in E \mid u,v\in V'\}$.
The \emph{degeneracy} $d$ of a graph $G$ is the smallest value such that every nonempty subgraph of $G$ has a vertex of degree at most $d$~\cite{lick_white_1970}. It is possible to order the vertices of a graph $G$ in time $O(n+m)$ so that every vertex has $d$ or fewer neighbors later in the ordering; such an ordering is called a \emph{degeneracy} ordering~\cite{els-2013}.

A vertex set $C \subseteq V$ is called a \emph{clique} if, for each pair of distinct of vertices $u, v \in C$, $\{u,v\} \in E$.
A set of cliques $\mathcal{C}$ is called an \emph{edge clique cover (ECC)} (or just a \emph{clique cover}) of $G$ if for every edge $\{u,v\}\in E$ there exists at least one $C\in\mathcal{C}$ such that $\{u,v\}\subseteq C$. That is, there is some clique in $\mathcal{C}$ that \emph{covers} $\{u,v\}$. The set of cliques $\mathcal{C}$ is said to cover the graph $G$. An ECC of minimum cardinality is called a \emph{minimum ECC}, and its cardinality is denoted by $\theta_E(G)$, called the edge clique cover number. 

Similarly, in a \emph{vertex clique cover (VCC)}, every \emph{vertex} $v\in V$ is covered by some clique. The cardinality of a minimum VCC is the \emph{clique cover number}, denoted by $\theta(G)$.

\section{Existing Tools Discussion}
In this section, we discuss basic tools that we will use to solve the ECC problem, together with insights into their behavior on sparse graphs. We begin by describing the existing ECC data reductions by Gramm et al.~\cite{gramm-2009}. We then discuss how to convert an input ECC instance to an equivalent VCC instance using the technique of Kou et al.~\cite{10.1145/359340.359346}. We will extend these tools to develop our full algorithm combining ECC and VCC reductions in the next section.
%
%

\subsection{ECC Reduction Rules}
Gramm et al.~\cite{gramm-2009} introduce four data reduction rules that either cover edges by a clique known to be in a minimum cardinality ECC or add edges to the input graph $G$. Once all of a vertex $v$'s incident edges are covered, $v$ can be removed from the graph. 

With each edge $\{u,v\}$, Gramm et al. store the common neighbors in $G$, denoted by $N_{\{u,v\}}$, as well as a count $c_{\{u,v\}} = |E(G[N_{\{u,v\}}])|$ of the edges between common neighbors. These values are updated in ECC Reduction~\ref{red:gramm_1}, and are used in ECC Reduction~\ref{red:gramm_2}. 

Throughout the application of data reductions, vertices are removed from $G$ and edges are covered. Figure~\ref{fig:gramm} illustrates an example of the reductions.  Set let edge set $E'\subseteq E$ be the set of uncovered edges (by extension, $E\setminus E'$ are the covered edges). The graph $G$ only changes when a vertex is removed.

We note that the data reductions by Gramm et al.~\cite{gramm-2009} are particularly effective for sparse graphs; however, the original data reductions were not written with efficiency in mind. Although these reductions have (very) slow theoretical running times, we offer insights as to why their reductions are faster in practice than indicated by the theoretical running time from Gramm et al.~\cite{gramm-2009}.

\begin{figure}
\begin{center}
\includegraphics{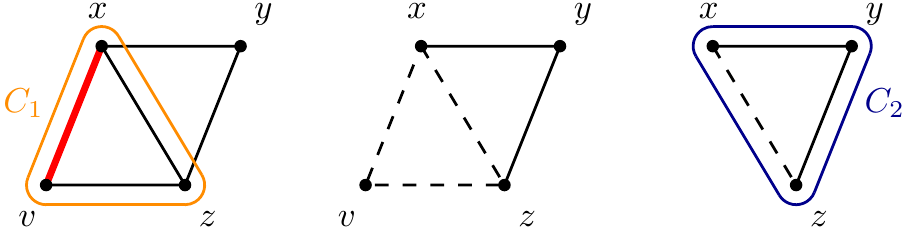}
\caption{Illustrating Gramm et al.~\cite{gramm-2009}'s data reductions: (left) edge $\{v,x\}$ is in exactly one maximal clique $C_1$, triggering ECC Reduction~\ref{red:gramm_2} and covering edges $\{v,x\}$, $\{v,z\}$, and $\{x,z\}$ (middle). Vertex $v$ can then be removed with ECC Reduction~\ref{red:gramm_1}. The remaining triangle is covered by clique $C_2$ by applying ECC Reduction~\ref{red:gramm_2} to either $\{x,y\}$ or $\{y,z\}$.}
\label{fig:gramm}
\end{center}
\end{figure}

\begin{eccreduction}[\cite{gramm-2009}]
Let $v\in V$ be a vertex whose incident edges are all covered (i.e., in $E\setminus E'$). Then remove $v$ from the graph $G$, along with its incident edges, and update values $c_{\{w,x\}}$ and $N_{\{w,x\}}$ for all uncovered edges $\{w,x\} \in E'$ whose endpoints are both adjacent to $v$, i.e., $\{w,x\} \subseteq N(v)$.
\label{red:gramm_1}
\end{eccreduction}

As noted by Gramm et al.~\cite{gramm-2009}, this step can be applied to all vertices in running time $O(n^2m)$ by iterating over each vertex $v$ and updating $N_{\{u,w\}}$ for all edges $\{u,w\}\in E'$ whose endpoints are adjacent to $v$. However, in sparse graphs the maximum degree in $G$, denoted $\Delta$, is significantly smaller than $n$. Each edge $\{u,w\}$ has its set $N_{\{u,w\}}$ updated at most $\Delta$ times, taking $O(\Delta)$ time to update each time, giving a more reasonable running time of $O(\Delta^2 m)$. We note that with adjustments, this can be run faster by enumerating all triangles in $G$ in time $O(dm)$ using the triangle listing by Chiba and Nishizeki~\cite{chiba-1985} and updating $N_{\{u,w\}}$ for edge $\{u,w\}$ in each triangle; however, this is a different implementation than that done by Gramm et al.~\cite{gramm-2009} and not our focus here.

\begin{eccreduction}[\cite{gramm-2009}]
\label{red:gramm_2}
Let edge $\{u,v\}\in E'$ be an uncovered edge such that $c_{\{u,v\}} = \binom{|N_{\{u,v\}}|}{2}$ (i.e., the edge is in exactly one maximal clique in $G'$). Then $C=N_{\{u,v\}}\cup\{u,v\}$ is a maximal clique of $G$ in some minimum ECC. Add the clique $C$ to the clique cover, and cover any uncovered edges in $C$ in $G$. 
\end{eccreduction}

As noted by Gramm et al.~\cite{gramm-2009}, ECC Reduction~\ref{red:gramm_2} can be implemented in time $O(n^2m)$ by iterating over each edge $\{u,v\}\in E'$, checking if $c_{\{u,v\}} = \binom{|N_{\{u,v\}}|}{2}$ in $O(1)$ time, and covering the edges of $\{u,v\}$'s clique in time $O(n^2)$ time.

However, when run on sparse graphs, which tend to have low degeneracy $d$~\cite{els-2013}, this rule is much faster. Graphs with degeneracy $d$ have cliques of at most $d+1$ vertices, therefore the reduction is only triggered when $|N_{\{u,v\}}| < d$. Hence, in practice, we should observe the much faster running time of $O(d^2m)$.

Gramm et al. introduce two more ECC reductions, however, they are more complex and we choose not to run them here. Experiments by Gramm et al. show that these reductions are very slow in practice, and only improve the search tree size by a constant factor when incorporated in branch and reduce~\cite{gramm-2009}. We invite the interested reader to see ECC Reductions~\ref{red:gramm_3} and~\ref{red:gramm_4} in Appendix~\ref{appendix:gramm}.

\subsection{Transforming an ECC Instance to a VCC Instance}
Kou et al.~\cite{10.1145/359340.359346} showed that the ECC problem is NP-hard via a polynomial-time reduction from the VCC problem. Furthermore, they gave a polynomial-time reduction \emph{to} the VCC problem, which we use as the basis of our transformation. We describe their transformation and briefly justify why it works.

\begin{figure}[!t]
\begin{subfigure}[t]{\textwidth}
\centering
\includegraphics[]{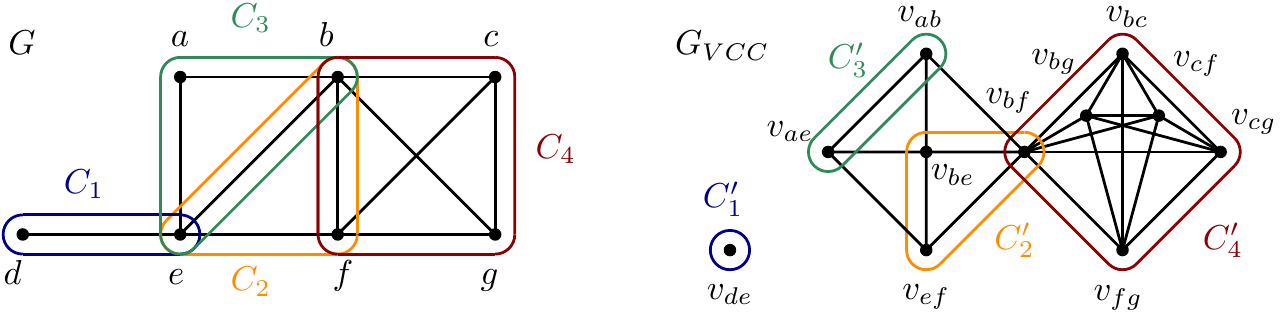}
\end{subfigure}
\caption{Example graph $G$ and its transformed graph $G_{VCC}$, with minimum clique covers.}
\label{fig:kou}
\end{figure}

Given an input graph $G=(V,E)$ for the ECC problem Kou et al.~\cite{10.1145/359340.359346} transform $G$ to a new graph $G_{VCC} = (V_{VCC}, E_{VCC})$ that is an equivalent VCC instance as follows. For each edge $\{x,y\}\in E$, create a new vertex $v_{xy}\in V_{VCC}$, then add an edge $\{v_{xy}, v_{wz}\}$ to $E_{VCC}$ if and only if there exists a clique $C$ in $G$ containing both $\{x,y\}$ and $\{w,z\}$. Now, for any given subset $C\subset V_{VCC}$, $C$ is a clique in $G_{VCC}$ iff its vertices' corresponding edges in $E$ also induce a clique in $G$. Hence, a minimum cardinality VCC in $G_{VCC}$ corresponds to a minimum cardinality ECC in $G$. (See Figure~\ref{fig:kou}.)

To determine if two edges are in a clique together in $G$, Kou et al.~\cite{10.1145/359340.359346} make the following observation: 

\begin{observation}[\cite{10.1145/359340.359346}]
\label{obs:triquad}
Two distinct edges $\{x,y\}, \{w,z\}$ are in a clique together in $G$ iff $\{x,y\}$ and $\{w,z\}$ are incident and $\{x,y\} \cup \{w,z\}$ induce a triangle, or $\{x,y\}$ and $\{w,z\}$ are not incident and $\{w,x,y,z\}$ form a 4-clique.
\end{observation}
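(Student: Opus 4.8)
The plan is to prove both directions of the biconditional by reasoning directly about what it means for two edges to lie in a common clique of $G$. First I would fix two distinct edges $\{x,y\}$ and $\{w,z\}$ and split into the two exhaustive cases based on whether the edges share an endpoint. If they are incident, write the shared vertex and observe that $\{x,y\}\cup\{w,z\}$ is a set of exactly three vertices; if they are not incident, all four of $x,y,w,z$ are distinct and $\{x,y\}\cup\{w,z\}=\{w,x,y,z\}$ has exactly four vertices. This case split is forced by the number of distinct vertices involved, so it is clean and exhaustive.

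For the forward direction, suppose $\{x,y\}$ and $\{w,z\}$ both lie in a clique $C$ of $G$. Then every pair of vertices among $\{x,y,w,z\}$ is an edge of $G$, since all these vertices belong to $C$; that is, $G[\{x,y\}\cup\{w,z\}]$ is itself a clique. In the incident case this is a clique on three vertices, i.e.\ a triangle; in the non-incident case it is a clique on four vertices, i.e.\ a $4$-clique. This direction is essentially immediate from the definition of a clique as a set in which every pair is adjacent, together with the fact that subsets of cliques are cliques.

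For the reverse direction, suppose the stated structural condition holds. If $\{x,y\}$ and $\{w,z\}$ are incident and $\{x,y\}\cup\{w,z\}$ induces a triangle, then that triangle is a clique of $G$ containing both edges, so we are done. If they are not incident and $\{w,x,y,z\}$ induces a $4$-clique, then that $4$-clique is a clique of $G$ containing both $\{x,y\}$ and $\{w,z\}$, so again we are done. The only subtlety to flag is that the observation asserts an ``iff'' restricted to the two listed cases and implicitly rules out all other configurations (e.g.\ incident edges whose third pair of endpoints is nonadjacent, or non-incident edges spanning fewer than four mutually adjacent vertices); I would note explicitly that in any such excluded configuration $G[\{x,y\}\cup\{w,z\}]$ fails to be a clique, hence by the forward direction the two edges cannot lie in a common clique, which is consistent with the statement.

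I do not expect any real obstacle here: the entire argument rests on the definition of a clique and on counting distinct endpoints. The one place to be careful is purely notational --- making sure the incident case is handled without loss of generality regardless of \emph{which} endpoints coincide, and making sure that ``$\{x,y\}\cup\{w,z\}$ induce a triangle'' is read as ``$G$ restricted to those three vertices is a complete graph,'' so that the equivalence with ``lies in a common clique'' is transparent.
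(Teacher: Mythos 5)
Your argument is correct: both directions follow directly from the definition of a clique, exactly as you lay out, and the case split on whether the two distinct edges share an endpoint (three versus four distinct vertices) is the right decomposition. The paper itself offers no proof of this observation---it simply cites Kou et al.---so there is nothing to contrast with; your write-up supplies the standard, expected justification, and the only care needed is the notational point you already flag, that distinct incident edges share exactly one endpoint so the union has exactly three vertices.
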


However, there is a clear issue when using this transformation: how large can $G_{VCC}$ be? We briefly discuss its size and sparsity.

\subsubsection{The Effect of Transformation on Graph Size and Sparsity}
In the worst case, the size of $G_{VCC}$ is a quadratic factor larger than $G$. Indeed, if the graph $G$ is itself the complete graph $K_n$, on $n$ vertices and $\Theta(n^2)$ edges, then the transformed graph is the complete graph $K_{n(n-1)/2}$ having $\Theta(n^2)$ nodes and $\Theta(n^4)$ edges. However, we show that the size of the graph only increases by a factor of $O(d^2)$, where $d$ is the degeneracy of the graph. Real-world sparse graphs have low degeneracy~\cite{els-2013}, and thus this is a significant improvement over the worst case.

\begin{theorem}
Let the degeneracy of $G=(V,E)$ be $d$. Then $|V_{VCC}| = m \leq dn$ and $|E_{VCC}| = O(d^2m)$.
\end{theorem}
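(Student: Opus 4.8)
The plan is to prove the three assertions separately; the first two are essentially immediate, and the bound on $|E_{VCC}|$ is the substantive part.

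First, $|V_{VCC}| = m$ holds by construction, since the transformation introduces exactly one vertex $v_{xy}$ for every edge $\{x,y\}\in E$. For $m \le dn$, I would use a degeneracy ordering $\prec$ of $G$ (computable in linear time): orienting each edge towards its $\prec$-later endpoint gives every vertex out-degree at most $d$, so $m = \sum_{v\in V} d^{+}(v) \le dn$.

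For $|E_{VCC}| = O(d^2 m)$, the first step is to reinterpret $E_{VCC}$ combinatorially. By construction, the edges of $G_{VCC}$ are in bijection with the unordered pairs of \emph{distinct} edges of $G$ that lie in a common clique of $G$, and by Observation~\ref{obs:triquad} the (at most four) endpoints of such a pair induce either a triangle (when the two edges are incident) or a $K_4$ (when they are disjoint). I would therefore charge each edge of $G_{VCC}$ either to a triangle of $G$ or to a $K_4$ of $G$. Since a triangle contains only $\binom{3}{2}=3$ pairs of edges, all of them incident, and a $K_4$ contains exactly $3$ pairs of disjoint edges (one per perfect matching), every triangle and every $K_4$ of $G$ is charged at most $3$ times, yielding $|E_{VCC}| \le 3 t_3 + 3 t_4$, where $t_j$ denotes the number of copies of $K_j$ in $G$.

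The remaining step is to bound $t_3$ and $t_4$ in terms of $d$ and $m$. Fixing a degeneracy ordering $\prec$, I would map each $j$-clique $C$ to the pair consisting of its two $\prec$-smallest vertices (which form an edge of $G$) together with the set of its remaining $j-2$ vertices; this map is injective, because $C$ is recovered as the union of the two coordinates. Those $j-2$ remaining vertices are all $\prec$-later neighbors of the second-smallest vertex of $C$, and every vertex has at most $d$ later neighbors, so once the base edge is fixed there are at most $\binom{d}{j-2}$ choices for the set; hence $t_j \le m\binom{d}{j-2}$. In particular $t_3 \le dm$ and $t_4 \le m\binom{d}{2} = O(d^2 m)$, which gives $|E_{VCC}| = O(d^2 m)$.

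The only non-routine ingredient is the clique-counting inequality $t_j \le m\binom{d}{j-2}$; it is a mild sharpening of the textbook bound $t_j \le n\binom{d}{j-1}$, and the sharpening is exactly what is needed, since $m$ can be far smaller than $dn$ for sparse graphs. I also expect the tempting shortcut — bounding the degree of each vertex $v_{xy}$ in $G_{VCC}$ by a function of $d$ alone — to fail: an edge $\{x,y\}$ of $G$ can have arbitrarily many common neighbors even when $G$ has degeneracy $2$ (take $K_{2,t}$ together with the edge $\{x,y\}$), so the charging-to-cliques argument is essential rather than cosmetic, even though the \emph{average} degree in $G_{VCC}$ is $O(d^2)$.
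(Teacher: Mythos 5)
Your proof is correct and follows essentially the same route as the paper's: both bound $|V_{VCC}|=m\le dn$ via a degeneracy ordering, and both bound $|E_{VCC}|$ by charging each edge of $G_{VCC}$ to a triangle or $K_4$ of $G$ (three charges each) and then counting those cliques by anchoring them at their earliest vertices in the degeneracy ordering. Your unified bound $t_j\le m\binom{d}{j-2}$, anchored at the $\prec$-smallest edge of the clique, is a slightly cleaner packaging of the paper's separate per-vertex triangle count and per-edge $4$-clique count, but it is not a different argument.
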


\begin{proof}
By construction $|V_{VCC}| = m$; hence, to bound $|V_{VCC}|$, we bound the number of edges in $G$. In a degeneracy ordering of the graph, each vertex has at most $d$ later neighbors in the ordering. Therefore, $|V_{VCC}| = m \leq dn$. 
To bound $|E_{VCC}|$, we compute an upper bound on the number of triangles and 4-cliques in $G$. Following Observation~\ref{obs:triquad}, each edge in $E_{VCC}$ corresponds to a pair of edges in $E$ contained in a triangle or a pair of non-incident edges in a 4-clique. Each triangle has 3 edges, and each 4-clique has 3 pairs of non-incident edges. Therefore, an asymptotic upper bound of the number of triangles and 4-cliques in $G$ gives an upper bound for $|E_{VCC}|$.  
In any triangle, some vertex must come first in a degeneracy ordering, and can be in a triangle with at most $\binom{d}{2}$ of its at most $d$ later neighbors. Therefore each vertex is in $O(d^2)$ triangles with its later neighbors and, summing up over all vertices, contributes at most $O(d^2n)$ edges to $E_{VCC}$. Similarly, for each edge $\{u,v\}$ we count the number of $4$-cliques it is in with (non-incident) edges that come lexicographically after it in the degeneracy ordering. The number of triangles the second vertex can be in with later neighbors is $\binom{d}{2}$ and hence the edge is in at most $O(d^2)$ 4-cliques with $v$'s at most $d$ later neighbors, giving at most $O(d^2m)$ 4-cliques total. Thus, we conclude that $|E_{VCC}| = O(d^2m)$.
\end{proof}

Thus, the size of the $G_{VCC}$ has size at most $O(d^2m)$, a factor $O(d^2)$ larger than $G$.
As a consequence, the average degree of the graph may increase, but by no more than a factor $d$: whereas $G$ has average degree $2|E|/|V| = O(dn) / n = O(d)$, graph $G_{VCC}$ has average degree $2|E_{VCC}|/|V_{VCC}| = O(d^2m) / m = O(d^2)$. Therefore, for input graphs with small degeneracy, the transformed graph is expected to be sparse as well.

However, even if the degeneracy $d$ is small, the graph $G_{VCC}$ may be very large in practice. Hence, to use this transformation, we require techniques to keep the graph size manageable.

\section{Synergistic Reductions: Applying ECC and VCC Reductions}

We propose to handle the blow-up by Kou et al.~\cite{10.1145/359340.359346} by applying both ECC and VCC reductions to the problem, which we call \emph{synergistic} data reduction. We first show how to adjust the transformation to work on reduced ECC instances, after which we can apply VCC reductions. We also explore the possibility of ``lifting'' VCC reductions to ECC reductions.

\subsection{Transforming a Partially-Covered ECC Problem Kernel}
Recall that the data reductions from Gramm et al.~\cite{gramm-2009} result in a graph in which some edges are covered, which is not supported by the transformation of Kou et al.~\cite{10.1145/359340.359346}. While it is tempting to modify the transformation to operate on \emph{only} the uncovered edges $E'$, this does not necessarily result in an equivalent instance, as already-covered edges may still be needed to compute a minimum number of cliques covering $E'$. For instance, in Figure~\ref{fig:gramm}, covering edges $\{x,y\}$ and $\{y,z\}$ with the single clique $C_2$ uses the already-covered edge $\{x,z\}$.

One way to correct for this is to first perform the transformation on the entire graph $G = (V,E)$, and then take the subgraph induced by the vertices corresponding to uncovered edges in $E'$. However, this strategy is slow when the edge set $E$ is significantly larger than $E'$.
We show that it is possible to perform the transformation without making vertices for all edges in $E$. Note that since all that remains is to cover the edges in $E'$, we now focus on covering all $E'$ using a minimum number of cliques in $G$. Taken together with already-chosen cliques from ECC reductions, this gives us a covering of all of $G$. (See Figure~\ref{fig:transformkou}.)

\begin{figure}[!t]
\begin{subfigure}[t]{\textwidth}
\centering
\includegraphics[]{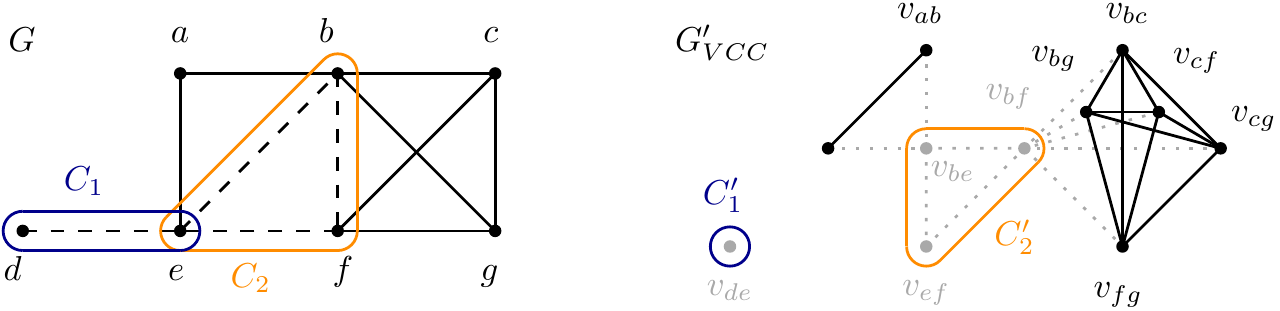}
\end{subfigure}
\caption{A partially-covered graph $G$ with cliques $C_1$, $C_2$ already added to the cover, and its transformed graph $G'_{VCC}$. Grayed vertices and (dotted) edges are those in $G_{VCC}$, but not $G'_{VCC}$.}
\label{fig:transformkou}
\end{figure}

We transform $G$ to a graph $G'_{VCC} = (V'_{VCC}, E'_{VCC})$, where $V'_{VCC} = \{v_{xy}\mid \{x,y\}\in E'\}$ and $E'_{VCC} = \{\{v_{xy}, v_{wz}\}\mid \{x,y\},\{w,z\}\in E' \text{ and $\{x,y\}\cup\{w,z\}$ is a clique in $G$}\}$. This transformation preserves cliques in $G$ that cover edges in $E'$, which we capture with the following observation.

\begin{observation}
\label{obs:same}
If $C'$ is a clique in $G'_{VCC}$ then $C=\cup_{v_{xy}\in C'}\{x,y\}$ is a clique covering $|C'|$ edges of $E'$ in $G$.
\end{observation}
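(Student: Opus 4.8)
The plan is to unpack the definition of $G'_{VCC}$ and verify the two claims in Observation~\ref{obs:same}: that $C$ is a clique in $G$, and that $C$ covers exactly $|C'|$ edges of $E'$. Throughout, let $C' = \{v_{x_1y_1}, v_{x_2y_2}, \dots, v_{x_ky_k}\}$ be a clique in $G'_{VCC}$, and set $C = \bigcup_{i=1}^{k}\{x_i, y_i\}$.

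First I would show $C$ is a clique in $G$. Take any two distinct vertices $a, b \in C$. Each of $a, b$ appears as an endpoint of some edge in $E'$ whose corresponding vertex lies in $C'$; say $a$ is an endpoint of $\{x_i, y_i\}$ and $b$ of $\{x_j, y_j\}$. If $i = j$ then $\{a,b\} = \{x_i, y_i\} \in E' \subseteq E$, so $a$ and $b$ are adjacent. If $i \neq j$, then since $C'$ is a clique in $G'_{VCC}$, the edge $\{v_{x_iy_i}, v_{x_jy_j\}}$ is in $E'_{VCC}$, which by definition means $\{x_i,y_i\} \cup \{x_j,y_j\}$ is a clique in $G$; since $a, b$ both lie in this set, $\{a,b\}\in E$. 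Hence every pair of distinct vertices of $C$ is adjacent, so $C$ is a clique in $G$.

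Next I would count the edges of $E'$ covered by $C$. For each $i$, the edge $\{x_i, y_i\}$ is a pair of vertices in $C$, hence (since $C$ is a clique) an edge of $G$ contained in $C$; and $\{x_i,y_i\}\in E'$ by construction of $V'_{VCC}$, so $C$ covers it. The map $v_{x_iy_i}\mapsto\{x_i,y_i\}$ is injective because the vertices of $V'_{VCC}$ are in bijection with the edges of $E'$, so these $k = |C'|$ edges are distinct. This already shows $C$ covers \emph{at least} $|C'|$ edges of $E'$. For the reverse inequality I would note that every edge of $E'$ covered by $C$ is of the form $\{a,b\}$ with $a, b\in C$ and $\{a,b\}\in E'$; its corresponding vertex $v_{ab}$ is in $V'_{VCC}$, and for any $v_{x_iy_i}\in C'$ the set $\{a,b\}\cup\{x_i,y_i\}\subseteq C$ is a clique in $G$ (again since $C$ is a clique), so $v_{ab}$ is adjacent in $G'_{VCC}$ to every vertex of $C'$; as $C'$ is a \emph{maximal}... — actually, maximality is not assumed, so instead I would observe that $C'$ being merely a clique does not by itself force $v_{ab}\in C'$, which means the literal statement ``$C$ covers $|C'|$ edges'' should be read as ``$C$ covers at least $|C'|$ edges'' (or one restricts attention to the image of $C'$).

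The main obstacle I anticipate is exactly this bookkeeping subtlety: the clean correspondence ``$|C'|$ edges'' is really the statement that the $|C'|$ edges $\{x_i,y_i\}$ are distinct and all covered, which is what is needed downstream (one clique $C$ in $G$ suffices to replace the clique $C'$ in $G'_{VCC}$ for the purpose of covering those edges). So the key steps in order are: (1) fix notation for $C'$ and define $C$; (2) prove $C$ is a clique in $G$ by the case split on whether two chosen vertices come from the same edge or different edges, invoking the defining adjacency of $E'_{VCC}$ in the latter case; (3) exhibit the injection $v_{x_iy_i}\mapsto\{x_i,y_i\}$ from $C'$ into the edges of $E'$ covered by $C$, using that $V'_{VCC}$ is in bijection with $E'$. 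This suffices.
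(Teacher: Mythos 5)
Your proof is correct and takes essentially the approach the paper intends: the paper states this as an observation without an explicit proof, and your direct verification from the definition of $E'_{VCC}$ (the pairwise case split establishing that $C$ is a clique, plus the bijection between $V'_{VCC}$ and $E'$ giving $|C'|$ distinct covered edges) is precisely the immediate argument being invoked. Your reading of ``covering $|C'|$ edges'' as ``covering the $|C'|$ distinct edges of $E'$ corresponding to the vertices of $C'$'' (hence at least $|C'|$, possibly more) is the intended one and is all that the proof of Theorem~\ref{thm:equivalent} uses.
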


Furthermore, the transformation gives a correspondence between cliques covering $E'$ in $G$ and VCCs in $G'_{VCC}$.

\begin{theorem}
\label{thm:equivalent}
If $\mathcal{C}'$ is a VCC in $G'_{VCC}$ then $\mathcal{C} = \{\cup_{v_{xy}\in C'}\{x,y\}\mid C'\in \mathcal{C'}\}$ is a set of cliques covering $E'$ in $G$.
\end{theorem}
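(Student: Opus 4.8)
The plan is to check the two defining properties of an edge clique cover of $E'$ in $G$: that every member of $\mathcal{C}$ is a clique of $G$, and that together the members of $\mathcal{C}$ cover every edge in $E'$. The first property is handed to us directly by Observation~\ref{obs:same}, applied clique by clique: for each $C' \in \mathcal{C}'$, since $C'$ is in particular a clique of $G'_{VCC}$, the set $C = \bigcup_{v_{xy}\in C'}\{x,y\}$ is a clique of $G$ (and it covers exactly $|C'|$ edges of $E'$, though we only need that it is a clique). Hence every element of the set $\mathcal{C}$ is a clique of $G$; note in passing that the map $C' \mapsto C$ need not be injective, so $|\mathcal{C}| \le |\mathcal{C}'|$, which is the inequality one wants when chaining minimum covers.

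For the covering property, I would fix an arbitrary uncovered edge $\{x,y\} \in E'$ and exhibit a clique of $\mathcal{C}$ containing both endpoints. By the definition of $V'_{VCC}$ there is a corresponding vertex $v_{xy}\in V'_{VCC}$, and since $\mathcal{C}'$ is a \emph{vertex} clique cover of $G'_{VCC}$, some $C'\in\mathcal{C}'$ satisfies $v_{xy}\in C'$. The associated clique $C = \bigcup_{v_{wz}\in C'}\{w,z\}\in\mathcal{C}$ then contains $\{x,y\}$ as a subset, so $C$ covers $\{x,y\}$ in $G$. As $\{x,y\}$ was an arbitrary element of $E'$, the family $\mathcal{C}$ covers all of $E'$, which completes the argument.

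I do not expect a real obstacle here: the proof is essentially a bookkeeping translation between the two instances, and the only point needing a moment's care is that Observation~\ref{obs:same} is invoked separately for each clique of $\mathcal{C}'$, and that ``covering the edge $\{x,y\}$'' means the containment $\{x,y\}\subseteq C$ rather than $C$ being one of the original edges. It is worth stating explicitly (perhaps in a following remark) that this theorem is the ``soundness'' half of the correspondence; the converse direction — that a clique cover of $E'$ in $G$ induces a VCC of $G'_{VCC}$ of no larger cardinality, by sending each clique $C$ of $G$ to the set $\{v_{xy} : \{x,y\}\in E',\ \{x,y\}\subseteq C\}$, which is a clique of $G'_{VCC}$ by the definition of $E'_{VCC}$ (equivalently Observation~\ref{obs:triquad}) — runs symmetrically, and the two together yield that a minimum VCC of $G'_{VCC}$ has the same size as a minimum set of cliques of $G$ covering $E'$.
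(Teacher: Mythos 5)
Your proof is correct and follows essentially the same route as the paper's: apply Observation~\ref{obs:same} to each clique $C'\in\mathcal{C}'$ to get a clique of $G$ covering the corresponding edges, then use the fact that $\mathcal{C}'$ covers every vertex $v_{xy}\in V'_{VCC}$ to conclude that every edge of $E'$ is covered. Your side remark that the map $C'\mapsto C$ need not be injective (so $|\mathcal{C}|\le|\mathcal{C}'|$) is a slightly more careful reading than the paper's follow-up claim that $|\mathcal{C}|=|\mathcal{C}'|$, but it does not change the argument for the theorem itself.
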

\begin{proof}
By Observation~\ref{obs:same}, every clique $C'\in\mathcal{C}'$ in $G'_{VCC}$ corresponds to a clique $C = \cup_{v_{xy}\in C'}\{x,y\}$ in $G$ that covers its corresponding edges of $E'$. Hence, a VCC that covers all $V'_{VCC}$ of $G'_{VCC}$ corresponds to a collection of cliques covering all edges $E'$ in $G$.
\end{proof}

Note that in Theorem~\ref{thm:equivalent}, $|\mathcal{C}|=|\mathcal{C}'|$. Hence, a minimum VCC in $G'_{VCC}$ corresponds to a minimum-cardinality set of cliques covering $E'$ in $G$. This transformation gives us a technique for computing a minimum ECC: First apply the data reductions of Gramm et al., then compute $G'_{VCC}$ and use VCC reductions combined with any VCC solver to compute a minimum VCC in $G'_{VCC}$, giving us cliques covering $E'$ in $G$ and, ultimately an entire ECC of $G$. While applying VCC reductions to $G'_{VCC}$ may produce a smaller instance, these data reductions are not actually producing a smaller \emph{ECC instance}. However, as we now show, we can also ``lift'' some VCC reductions to the ECC problem, by keeping the equivalence between cliques in the transformation in mind. 

\subsection{Lifting VCC Reduction Rules to ECC}
Unlike the ECC problem, the VCC problem has many data reduction rules~\cite{strash-2022}. These include reductions based on simplicial vertices, dominance, twins, degree-2 folding, and crowns. We briefly discuss two classes of VCC reductions: clique-removal-based rules and folding-based rules. We place them in the context of the ECC problem, and discuss whether it is viable to ``lift'' them to the ECC problem, and if the graph transformation is needed.
%
By combining existing ECC reductions with VCC reductions, we aim to reduce ECC instances even further. 

\subsubsection{Clique-Removal-Based VCC Reductions}
We call a VCC reduction that removes a set of cliques from the graph a \emph{clique-removal-based} rule.
Four VCC reductions (simplicial vertex, dominance, twin removal, and crown) are clique-removal-based rules~\cite{strash-2022}.
Such rules can be easily transformed into an ECC reduction: By the equivalence between cliques in the problem transformation, stated in Observation~\ref{obs:same}, removing a clique in $G'_{VCC}$ is equivalent to covering its corresponding clique in $G$. Thus, to apply clique-removal-based VCC reductions directly to the ECC problem, we can compute $G'_{VCC}$, apply any clique-removal-based rules, and then cover these cliques in $G$. We capture this with the following theorem.

\begin{theorem} Any clique-removal-based VCC reduction can be lifted to an ECC reduction.
\end{theorem}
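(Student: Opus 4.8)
The plan is to formalize the informal argument already sketched in the paragraph preceding the statement, namely that a clique-removal-based VCC reduction acts on $G'_{VCC}$ in a way that, via Observation~\ref{obs:same} and Theorem~\ref{thm:equivalent}, has a faithful counterpart on the ECC side. First I would make precise what ``clique-removal-based'' means: such a rule identifies a set of cliques $\mathcal{K} = \{K_1,\dots,K_t\}$ in the VCC instance and asserts that there is a minimum VCC of the instance containing exactly these cliques (as cover sets for the vertices $\bigcup_i K_i$), so that the rule removes $\bigcup_i K_i$ from the graph, decrements the target by $t$, and leaves a residual VCC instance whose optimum plus $t$ equals the original optimum. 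This is exactly the shape of the simplicial-vertex, dominance, twin-removal, and crown rules cited from Strash and Thompson~\cite{strash-2022}.

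Next I would set up the lifted ECC reduction explicitly. Given the reduced ECC instance $(G, E')$ (after Gramm et al.'s reductions), construct $G'_{VCC}$ as in Section~4.1, apply the clique-removal-based VCC rule to obtain cliques $\mathcal{K} = \{K_1,\dots,K_t\}$ in $G'_{VCC}$, and for each $K_i$ form the clique $C_i = \bigcup_{v_{xy}\in K_i}\{x,y\}$ in $G$; by Observation~\ref{obs:same} each $C_i$ is indeed a clique of $G$ covering $|K_i|$ edges of $E'$. The lifted ECC reduction adds $C_1,\dots,C_t$ to the clique cover and marks the corresponding edges of $E'$ as covered, yielding a new uncovered set $E'' \subseteq E'$. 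The key correctness claim is then: $\theta$-style optimum for covering $E'$ in $G$ equals $t$ plus the optimum for covering $E''$ in $G$.

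For the correctness proof I would argue two inequalities. For ``$\le$'': take an optimal VCC of the residual VCC instance $G'_{VCC}[\,V'_{VCC}\setminus\bigcup_i K_i\,]$; adding back $K_1,\dots,K_t$ gives a VCC of $G'_{VCC}$ of size (residual optimum $+\, t$), which by Theorem~\ref{thm:equivalent} maps to a set of cliques covering $E'$ in $G$ of the same cardinality — but note that the residual VCC instance is, by the construction of $G'_{VCC}$, exactly the transformed instance for covering $E''$ in $G$, so its optimum equals the ECC-optimum for $E''$, giving the bound. For ``$\ge$'': the VCC-optimality guarantee of the clique-removal rule says the full VCC optimum on $G'_{VCC}$ is $t$ plus the residual optimum; combining this with the fact (noted after Theorem~\ref{thm:equivalent}, since $|\mathcal{C}|=|\mathcal{C}'|$) that minimum VCCs of $G'_{VCC}$ correspond bijectively in cardinality to minimum clique sets covering $E'$ in $G$ yields the reverse inequality. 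Putting the two together proves the lifted rule is sound: it only ever commits to cliques that belong to some minimum cover, exactly the requirement for an ECC reduction.

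The main obstacle is the subtle point that the residual graph produced by the VCC rule must itself be recognized as a transformed instance of a partially-covered ECC problem — i.e., that deleting the vertices $\bigcup_i K_i$ from $G'_{VCC}$ yields precisely $G''_{VCC}$, the transform of $(G,E'')$. This follows because $V'_{VCC}$ and the edge relation of $G'_{VCC}$ depend only on which edges are uncovered and on cliques of the unchanged graph $G$, so removing the vertices $v_{xy}$ for the newly covered edges is the same as rebuilding the transform with $E''$ in place of $E'$; I would state this as a short lemma before the main argument. A secondary, purely bookkeeping concern is that a clique $C_i$ in $G$ may already cover some edges outside $\bigcup_{K_j, j\ne i}$ or overlap with previously chosen cliques, but since ECC only requires each edge to be covered at least once, these overlaps are harmless and do not affect the cardinality accounting.
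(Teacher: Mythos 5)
Your proposal is correct and takes essentially the same route as the paper, which justifies this theorem only by the informal argument preceding it (via Observation~\ref{obs:same} and the cardinality equality noted after Theorem~\ref{thm:equivalent}); you merely make explicit the identification of the residual VCC instance with the transform of $(G,E'')$ and the optimality accounting. The one detail worth keeping in mind is that when the lifted rule covers \emph{all} uncovered edges of each lifted clique (as the paper's ECC Reduction~\ref{ecc:simplicial} does), the residual graph left by the VCC rule can have more vertices than the transform of $(G,E'')$, but, as you observe, this only helps and leaves the cardinality argument intact.
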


Of course, we could try to apply these reductions more efficiently to $G$ directly. We discuss two clique-removal-based VCC reductions and discuss whether they are worth implementing for ECC directly, or if we should transform $G$ to $G'_{VCC}$ first.

\paragraph*{Simplicial Vertex Reduction}

\begin{figure}[!t]
\begin{subfigure}[t]{0.27\textwidth}
\centering
\includegraphics[]{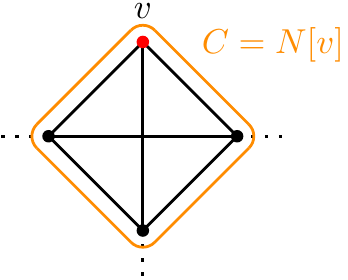}
\caption{$C$ is in a minimum VCC.}
\label{fig:simplicial1}
\end{subfigure}\hfill
\begin{subfigure}[t]{0.67\textwidth}
\centering
\includegraphics[]{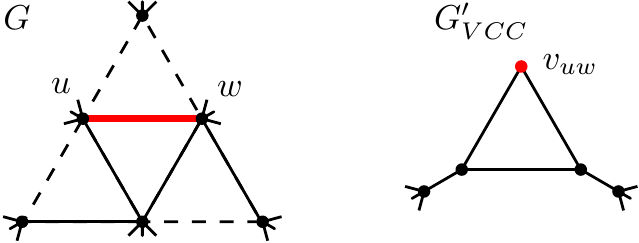}
\caption{Edge $\{u,w\}\in E'$ is in two cliques in $G$, but $v_{uw}$ in $G'_{VCC}$ is simplicial.}
\label{fig:simplicial2}
\end{subfigure}
\caption{The simplicial vertex VCC reduction can be applied after transforming $G$ to $G'_{VCC}$.}
\end{figure}

A vertex $v$ is \emph{simplicial} if $N[v]$ forms a clique. In this case, the clique $C = N[v]$ is in some minimum VCC. (See Figure~\ref{fig:simplicial1}.)

\begin{vccreduction}[Simplicial Vertex Reduction~\cite{strash-2022}]
\label{vcc:simplicial}
Let $v \in V$ be a simplicial vertex. Then $C=N[v]$ is a clique in some minimum VCC. Add $C$ to the clique cover and remove $C$ from the graph.
\end{vccreduction}

Applying VCC Reduction~\ref{vcc:simplicial} on $G'_{VCC}$ is reminiscent of applying ECC Reduction~\ref{red:gramm_2} on the untransformed graph $G$.  While it is true that for a $\{u,w\}\in E'$, if $N_{\{u,w\}}$ is a clique in $G$, then $v_{uw}$ is simplicial in $G'_{VCC}$, the converse is not true in general. Hence, VCC Reduction~\ref{vcc:simplicial} is more powerful. Consider the counterexample in Figure~\ref{fig:simplicial2}. Vertex $v_{uw}$ is simplicial in $G'_{VCC}$, but $\{u,w\}\in E'$ is in two cliques of $G$.

%
%
Thus, we have new data reduction for the ECC problem, which subsumes ECC Reduction~\ref{red:gramm_2}:

\setcounter{eccreduction}{4}
\begin{eccreduction}[Lifted Simplicial Vertex Reduction]
\label{ecc:simplicial}
Let edge $\{u,w\}\in E'$ and let set $C=\{x,y\in V \mid \{x,y\}\in E' \text{ and } \{u,w\}\cup\{x,y\} \text{ is a clique in $G$}\}$ be the set of vertices of edges in some clique with $\{u,w\}$. If $C$ is a clique, then add $C$ to the clique cover, and cover any uncovered edges of $C$ in $G$. 
\end{eccreduction}

To apply our lifted reduction, we could of course first compute $G'_{VCC}$ and then apply VCC Reduction~\ref{vcc:simplicial}. However, we can also apply it directly to $G$ with a slight modification to ECC Reduction~\ref{red:gramm_2}. For each edge $\{u,w\}\in E'$ compute the common neighborhood $N_{\{u,w\}}$. Instead of checking that the common neighborhood is a clique, collect the uncovered edges between vertices in $N_{\{u,w\}}$, and check if they induce a clique. Since $|N_{\{u,w\}}|\leq \Delta$, it takes $O(d\Delta)$ to collect uncovered edges by iterating through the at most $d$ later neighbors of each vertex, which dominates the running time of this step. Exhaustively applying the reduction to all edges takes time $O(d\Delta m)$, which is slightly slower than the $O(d^2m)$ time for ECC Reduction~\ref{red:gramm_2}.

Is it worth applying ECC Reduction~\ref{ecc:simplicial} directly to $G$, or should we first transform $G$ and run VCC Reduction~\ref{vcc:simplicial} instead? The transformation can be done in time $O(d^2m)$ by enumerating all of the triangles and $4$-cliques of $G$~\cite{chiba-1985}, hence performing the transformation is faster in theory than applying ECC Reduction~\ref{ecc:simplicial} to $G$ directly. However, in $G'_{VCC}$ the largest clique may have as many as $\Theta(d^2)$ vertices and $\Theta(d^4)$ edges since a clique of size $d+1$ in $G$ has $\Theta(d^2)$ edges in $G$. Therefore, the time to apply the VCC Reduction~\ref{vcc:simplicial} for each of the $m$ vertices of $G'_{VCC}$ is $O(d^4m)$. Thus, in theory, it is more efficient to apply ECC Reduction~\ref{ecc:simplicial} directly, rather than first applying a conversion. 

However, there are compelling reasons to perform the conversion. For one, most implementations of simplicial vertex reductions limit the degree of the vertex considered -- in some cases to as small as two -- since large-degree simplicial vertices rarely appear in sparse graphs. Therefore, in practice, it is unlikely that we would observe this large running time. However, a more compelling reason to perform the transformation is that there are two highly effective VCC reductions that we do not know how to apply directly to $G$. The first is the crown removal reduction (a clique-removal-based reduction) and the second is the degree-2 folding-based reduction.

\paragraph*{Crown Removal Reduction}

\label{appendix:crown}
The crown removal reduction is arguably one of the most powerful data reductions, successfully reducing sparse instances for the minimum vertex cover and VCC problems~\cite{abu-khzam-2007,akiba-tcs-2016,chang2017}.

\begin{figure}[!t]
\begin{subfigure}[t]{0.33\textwidth}
\centering
\includegraphics[]{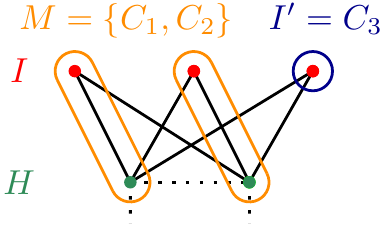}
\caption{A flared crown. Cliques $C_1$, $C_2$, and $C_3$ are in some minimum VCC.}
\label{fig:crown1}
\end{subfigure}\hfill
\begin{subfigure}[t]{0.60\textwidth}
\centering
\includegraphics[]{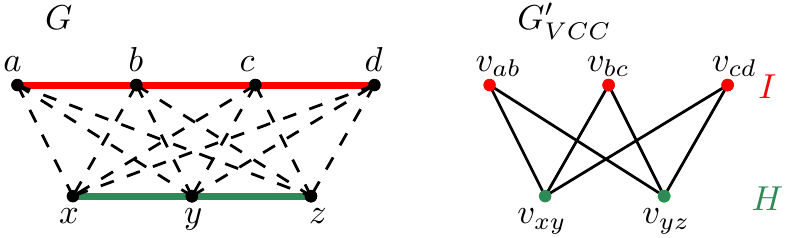}
\caption{A partially-covered $G$ where $G'_{VCC}$ is a flared crown.}
\label{fig:crown2}
\end{subfigure}
\caption{The crown removal VCC reduction can be applied after transforming $G$ to $G'_{VCC}$.}
\end{figure}

In a pair of vertex sets $(H,I)$, $H$ is called a \emph{head} and $I$ a \emph{crown} if: $I$ is an independent set, $N(I) = H$, and there exists a matching from $H$ to $I$ of size $|H|$. Figure~\ref{fig:crown1} shows a crown structure. Note that, due to the matching requirement, $|I| \geq |H|$. If $|I|=|H|$, the crown is called \emph{straight}, otherwise it is \emph{flared}. 
Strash and Thompson~\cite{strash-2022} give the following data reduction for the VCC problem, adapting a data reduction for the dual coloring problem~\cite{fomin-2019}.

\begin{vccreduction}[Crown Removal Reduction~\cite{strash-2022}]
\label{red:crown}
Let $(H,I)$ be a head and crown with matching $M$ and unmatched vertices $I'\subseteq I$. Then add cliques in $M$ and $I'$ to the clique cover and remove $N[I]$ from the graph. (See Figure~\ref{fig:crown1}.)
\end{vccreduction}

Note that it is possible to identify flared crowns by applying a reduction based on an LP relaxation, originally introduced for the minimum vertex cover problem by Nemhauser and Trotter~\cite{nemhauser-1975}. A variant of this algorithm due to Iwata et al.~\cite{iwata-2014} identifies and removes \emph{all} flared crowns at once by computing a maximum matching on a bipartite graph with $2n$ vertices and $2m$ edges using the Hopcroft-Karp algorithm~\cite{hopcroft-karp} with running time $O(m\sqrt{n})$.

As Figure~\ref{fig:crown2} illustrates, after exhaustively applying Gramm et al.'s~\cite{gramm-2009} ECC reductions it is possible to have a crown structure after transforming to $G'_{VCC}$. Thus, lifting the crown removal reduction can further reduce an ECC instance.
However, algorithms for computing a maximum matching for the LP relaxation use an explicit representation of $G'_{VCC}$ and therefore it is unclear how to run this reduction without first transforming $G$ to $G'_{VCC}$. The transformation and maximum matching can be computed in time $O(d^2 m + d^2m\sqrt{m}) = O(d^2m^{3/2})$, since there are $O(m)$ vertices and $O(d^2m)$ edges in $G'_{VCC}$. We leave the question of whether the LP relaxation reduction can be more efficiently lifted to an ECC reduction as an open problem.

\subsubsection{Folding-Based VCC Reductions}
In contrast to clique-removal-based reductions, \emph{folding-based} reductions contract a subset $S\subseteq V$ of vertices into a single vertex $v'$. \emph{Folding} $S$ produces a new graph $G^f = (V^f, E^f)$ with $V^f = (V\setminus S) \cup \{v'\}$ and $E^f=(E\setminus\{\{v,x\}\in E\mid v\in S\})\cup\{\{v',x\}\mid \exists v\in S, x\not\in S, \{v,x\}\in E\}$. We discuss the connections between the ECC problem and the simplest folding-based reduction, folding vertices of degree two.
\paragraph*{Degree-2 Folding}

\begin{figure}[!t]
\begin{subfigure}[t]{0.48\textwidth}
\centering
\includegraphics[]{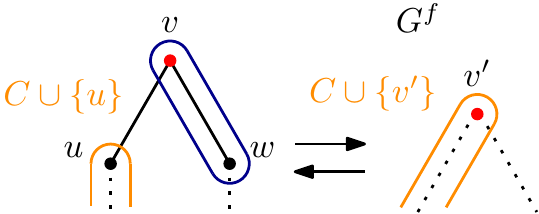}
\caption{The degree-2 folding VCC reduction.}
\label{fig:degreetwo1}
\end{subfigure}
\begin{subfigure}[t]{0.48\textwidth}
\centering
\includegraphics[]{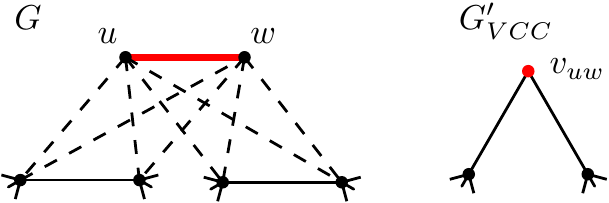}
\caption{Edge $\{u,w\}\in E'$ in $G$ transforms into a degree-2 vertex $v_{uw}$ with non-adjacent neighbors in $G'_{VCC}$.}
\label{fig:degreetwo2}
\end{subfigure}
\caption{The degree-2 folding VCC reduction can be applied after transforming $G$ to $G'_{VCC}$.}
\end{figure}

The degree-2 folding reduction for VCC contracts a degree-2 vertex $v$ with non-adjacent neighbors $u$ and $w$ that are \emph{crossing independent}~\cite{strash-2022}. That is, for each edge $\{x,y\} \subseteq N(u) \cup N(w)$ either $\{x,y\}\subseteq N(u)$ or $\{x,y\}\subseteq N(w)$. This condition ensures that no spurious cliques are formed after folding. A vertex $v$ meeting these conditions is \emph{foldable}.

\begin{vccreduction} [Degree-2 Folding Reduction~\cite{strash-2022}] 
\label{vcc:degree2}
Let $v \in V$ be a foldable degree-2 vertex with non-adjacent neighbors $N(v) = \{u, w\}$. Let $G^f$ be the graph obtained by folding $\{v,u,w\}$. Let $\mathcal{C}^f$ be a minimum VCC of $G^f$ with clique $C_{v'} \in \mathcal{C}^f$ covering vertex $v'$ and let $C = C_{v'} \setminus \{v'\}$. Then, the clique cover
\[\mathcal{C} = \begin{cases} 
    (\mathcal{C}^f \setminus \{C_{v'}\}) \cup \{C \cup \{u\}, \{v, w\}\} & \mbox{if } C \subseteq N(u)\mbox{,} \\
    (\mathcal{C}^f \setminus \{C_{v'}\}) \cup \{C \cup \{w\}, \{v, u\}\} & \mbox{otherwise,}
\end{cases}\]
is a minimum VCC of $G$.
\end{vccreduction}

See Figure~\ref{fig:degreetwo1} for an example of the degree-2 VCC reduction. We note that the transformation from an ECC instance to a VCC instance by Kou et al.~\cite{10.1145/359340.359346} does not produce any degree-2 vertices with non-adjacent neighbors, as edges forming a triangle or 4-clique in $G$ form a triangle or 6-clique in $G_{VCC}$.
However, our transformation with covered edges can result in such vertices (see Figure~\ref{fig:degreetwo2}). Thus, the degree-2 folding VCC reduction can be used to further reduce the instance when applied to $G'_{VCC}$.

We leave as an open problem whether folding-based rules can be lifted to new ECC reductions; we conjecture that it is possible to lift at least degree-2 folding.
However, given how effective the degree-2 folding reduction is in practice for the VCC problem, we highly recommend applying it, even though it incurs the overhead of the transformation to $G'_{VCC}$.

\subsection{Wrapping It All Up} With the tools in this section in hand, we have a clear path to solving the ECC problem on sparse graphs: first apply the data reductions due to Gramm et al.~\cite{gramm-2009}, then transform the partially-covered graph into a VCC instance, which can then be reduced further and solved with any VCC solver. We next perform experiments to evaluate this method.

\section{Experimental Evaluation}
We now compare our technique to the state of the art through extensive experiments on both synthetic instances and real-world graphs.

\subsection{Experimental Setup}
     We implemented the ECC reductions and ECC to VCC transformation in C++ and integrated our methods with the VCC reductions and VCC algorithms by Strash and Thompson\footnote{\url{https://github.com/darrenstrash/ReduVCC}}~\cite{strash-2022}, which we then compiled with \texttt{g++} version 11 using the \texttt{-O3} optimization flag. Our source code will be made available under the open source MIT license.
    All experiments were conducted on Hamilton College's High Performance Computing Cluster (HPCC), on a machine running CentOS Linux 7.8.2003, with four Intel Xeon Gold 6248 processors running at 2.50GHz with 20 cores each, and 1.5TB of memory. Each algorithm is run sequentially on its own core.

    We run experiments on six different algorithms. \textsf{Gramm} is the original branch-and-reduce code by Gramm et al.~\cite{gramm-2009} written in OCAML, which we compiled with \texttt{ocamlc} version 3.10.2, and provided a sufficiently large stack size due to its heavy use of recursion. We implement three algorithms in C++ that first exhaustively apply ECC data reductions, perform a problem reduction to a VCC instance, apply VCC reductions, and then run a VCC solver: \textsf{Redu$^3$BnR} solves with the VCC branch-and-reduce algorithm by Strash and Thompson~\cite{strash-2022}, \textsf{Redu$^3$IG} solves with the VCC iterated greedy (IG) heuristic algorithm by Chalupa~\cite{chalupa2016const}, and \textsf{Redu$^3$ILP} solves with an assignment-based ILP formulation~\cite{DBLP:conf/latin/JabrayilovM18,mehrotra-1996} for VCC and Gurobi version 9.5.1. Finally, the two heuristic algorithms \textsf{Conte}~\cite{conte-2020} and \textsf{EO-ECC}~\cite{abdullah-2022} are from their respective authors and are compiled with \texttt{javac} version 8 and \texttt{g++} version 11 with \texttt{-O3}, respectively. Unless stated otherwise, we run each algorithm with a 24-hour time limit. Our stated running times do not include I/O time such as graph reading and writing. 

In our tables, `Kernel' denotes the relevant size of the graph after reductions as either uncovered edges (\textsf{Gramm}) or vertices remaining (for VCC-based algorithms). `Time' is the time (in seconds) the solver takes to exactly solve the instance. A `--' indicates that the solver did not finish in the 24-hour time limit. \textbf{Bold} values indicate the value is the smallest among all algorithms in the table. 

We run our experiments on randomly-generated instances as well as real-world graphs.

\textbf{Erd\H{o}s-R\'enyi Graphs.}
We generate 70 instances of varying density using the $G(n,p)$ model of generating an $n$-vertex graph where each edge is selected independently with probability $p$. We use values of $n$ that are powers of two from $64$ to $2048$, with two different values of $p$ for each to show the effect of density on the tested algorithms. We generate 5 graphs with each $n$, $p$ pair using different random seeds to observe the behavior of algorithms on multiple instances of similar size and density. (See Tables~\ref{tab:er-full2} and~\ref{tab:er-full} in Appendix~\ref{app:full} for the full statistics.)

\textbf{Real-World Instances.} We run our experiments on 52 large, sparse, complex networks from the Stanford Network Data Repository (SNAP)\footnote{\url{https://snap.stanford.edu/data/}}, the Laboratory for Web Algorithmics (LAW)\footnote{\url{http://law.di.unimi.it/datasets.php}}, and the Koblenz Network Collection (KONECT)\footnote{\url{http://konect.cc/}}. These graphs include citation networks, web-crawl graphs, and social networks; the largest graph has 18M vertices, and most graphs follow a scale-free degree distribution: there are many low degree vertices and few high degree vertices. The number of vertices and edges for each instance can be found with experimental results in Tables~\ref{tab:exact} and \ref{tab:inexact}.

\subsection{Results on Synthetic Instances}
We begin by comparing the performance of \textsf{Gramm} and \textsf{Redu$^3$BnR} on synthetic instances generated with the Erd\H{o}s-R\'enyi $G(n,p)$ model. We present the average kernel size and running time from the execution of \textsf{Gramm} and \textsf{Redu$^3$BnR} on the 5 instances of each pair of $n$ and $p$ in Table~\ref{tab:er-abbv}. (Individual results can be found in Tables~\ref{tab:er-full2} and \ref{tab:er-full} in Appendix~\ref{app:full}.)

Focusing on running time, \textsf{Gramm} and \textsf{Redu$^3$BnR} are equally matched on very sparse graphs, quickly solving many instances in significantly less than one second. Though, as the density increases even slightly, which can be seen when fixing $n$ but increasing $p$, \textsf{Gramm} is no longer able to solve even small instances in a 24-hour time limit. However, on all instances, \textsf{Redu$^3$BnR} easily computes exact solutions. The reason why is clear, on problems that \textsf{Gramm} is unable to solve, the ECC kernel is large (for the highest density instance with $n=64,p=0.200$, even a kernel of average size 50 is too large for \textsf{Gramm} to solve), whereas the VCC kernels for \textsf{Redu$^3$BnR} are significantly smaller in all cases. Indeed, for the densest graphs of each value of $n$, \textsf{Gramm} is unable to solve every instance in 24 hours, but \textsf{Redu$^3$BnR} solves all graphs in less than a second. This illustrates that the combined reduction power of ECC and VCC reductions is able to handle denser instances than running ECC reductions alone.

\begin{table}
\begin{center}
\small
\caption{Results on small Erd\H{o}s-R\'enyi graphs of varying density. 
A `$^*$' indicates that not all runs finished in the 24-hour time limit, `--' indicates that no runs finished in the 24-hour time limit.}
\label{tab:er-abbv}
\setlength{\tabcolsep}{.7ex}
\begin{tabular}{rrr@{\hskip 3pt} rr@{\hskip 10pt} rr}
\toprule
\multicolumn{3}{c}{Graph}  & \multicolumn{2}{c}{\textsf{Gramm}} & \multicolumn{2}{c}{\textsf{Redu$^3$BnR}}\\
\cmidrule(r){1-3} \cmidrule(r){4-5} \cmidrule(r){6-7}                          
	$n$&	$p$&	$m$	&Kernel & Time (s) & Kernel & Time (s)\\\hline

\numprint{64} & \nprounddigits{3}\numprint{0.15} & \numprint{151} & \numprint{1}  & \bf$<$ \numprint{0.01} & \bf\numprint{0} & \bf$<$ \numprint{0.01} \\
\numprint{64} & \nprounddigits{3}\numprint{0.2} & \numprint{203} & \numprint{50} & \nprounddigits{2}\numprint{1324.52}$^*$\hspace*{-4pt} & \bf\numprint{10} & \bf$<$ \numprint{0.01} \\[3pt]
\numprint{128} & \nprounddigits{3}\numprint{0.1} & \numprint{404} & \bf\numprint{0} & \bf$<$ \numprint{0.01} & \bf\numprint{0} & \bf $<$ \numprint{0.01}\\
\numprint{128} & \nprounddigits{3}\numprint{0.15} & \numprint{610} & \numprint{245} & -- & \bf\numprint{51} & \bf\nprounddigits{2}\numprint{0.02618} \\[3pt]
\numprint{256} & \nprounddigits{3}\numprint{0.075} & \numprint{1217} & \numprint{21} & \nprounddigits{2}\numprint{0.02} & \bf\numprint{0} & \bf$<$ \numprint{0.01} \\
\numprint{256} & \nprounddigits{3}\numprint{0.1} & \numprint{1633} & \numprint{552} & -- & \bf\numprint{12} & \bf\nprounddigits{2}\numprint{0.0218} \\[3pt]
\numprint{512} & \nprounddigits{3}\numprint{0.05} & \numprint{3279} & \numprint{69} & \nprounddigits{2}\numprint{0.206} & \bf\numprint{1} & \bf\nprounddigits{2}\numprint{0.02472} \\
\numprint{512} & \nprounddigits{3}\numprint{0.065} & \numprint{4258} & \numprint{1140} & -- & \bf\numprint{5} & \bf\nprounddigits{2}\numprint{0.03976} \\[3pt]
\numprint{1024} & \nprounddigits{3}\numprint{0.0365} & \numprint{9537} & \numprint{629} & \nprounddigits{2}\numprint{153.21}$^*$\hspace*{-4pt} & \bf\numprint{4} & \bf\nprounddigits{2}\numprint{0.07718} \\
\numprint{1024} & \nprounddigits{3}\numprint{0.0375} & \numprint{9799} & \numprint{852} & -- & \bf\numprint{4} & \bf\nprounddigits{2}\numprint{0.06644} \\[3pt]
\numprint{2048} & \nprounddigits{3}\numprint{0.025} & \numprint{26123} & \numprint{1574} & \nprounddigits{2}\numprint{5.178} & \bf\numprint{4} & \bf\nprounddigits{2}\numprint{0.18148} \\
\numprint{2048} & \nprounddigits{3}\numprint{0.0275} & \numprint{28745} & \numprint{3618} & -- & \bf\numprint{5} & \bf\nprounddigits{2}\numprint{0.19704} \\
\bottomrule
\end{tabular}
\end{center}
\end{table}


\subsection{Solving Large Real-World Instances Exactly}

We now see which graphs can be solved exactly by one of three algorithms: \textsf{Gramm}, \textsf{Redu$^3$BnR}, and \textsf{Redu$^3$ILP}. The results are presented in Table~\ref{tab:exact}. \textsf{Gramm} was able to solve 12 of the 27 instances exactly; 10 of these graphs were solved because the kernel had 0 uncovered edges and the other two instances (\verb|ca-CondMat| and \verb|ca-GrQc|) had small kernels of less than 100 uncovered edges. However, \textsf{Gramm} exceeds the 24-hour time limit on the 15 other instances, even those with as few as 176 uncovered edges.

In contrast, \textsf{Redu$^3$BnR} solves 18 of the instances. On all instances, the kernel computed by \textsf{Redu$^3$BnR} was smaller than that of \textsf{Gramm}, the smallest of which is on \verb|zhishi-hudong-int|, which is reduced to 2\% of the size of \textsf{Gramm}'s kernel. With the exception of three instances (\verb|email-EuAll|, \verb|web-NotreDame|, and \verb|web-Stanford|), every instance was reduced to at most 10\% of \textsf{Gramm}'s kernel size.
However, the limitations of branch and reduce for the VCC problem begin to show on these instances. Similar to \textsf{Gramm}, \textsf{Redu$^3$BnR} only finishes within the 24-hour time limit on graphs with kernel size less than 100, and therefore its success is largely due to the reduction of the input instance (a pattern observed in other problems~\cite{strash-power-2016}). On the other hand, the Gurobi solver with an ILP formulation is able to solve kernels of much larger size, even up to \numprint{536196} vertices (in the case of \verb|eu-2005|).

\begin{table*}
\small
\caption{Comparing exact algorithms \textsf{Gramm}, \textsf{Redu$^3$BnR}, and \textsf{Redu$^3$ILP} on real-world instances solved by at least one of the algorithms in a 24-hour time limit. Times marked with a `*' indicate that the algorithm's speed was due to programming language differences and not algorithmic improvements.}
\label{tab:exact}
\begin{center}
\setlength{\tabcolsep}{.7ex}
\begin{tabular}{lrr@{\hskip 3pt} rr@{\hskip 10pt} rr@{\hskip 10pt} r}
\toprule
\multicolumn{3}{c}{Graph}  & \multicolumn{2}{c}{\textsf{Gramm}} & \multicolumn{2}{c}{\textsf{Redu$^3$BnR}} & \textsf{Redu$^3$ILP} \\
\cmidrule(r){1-3} \cmidrule(r){4-5} \cmidrule(r){6-7} \cmidrule(r){8-8}
	Name&	$n$&	$m$	&Kernel & Time (s) & Kernel & Time (s) & Time (s)\\\hline

 \verb|ca-AstroPh| & \numprint{18772} & \numprint{198050} & \numprint{2837} & -- & \bf\numprint{0} & \bf\nprounddigits{2}\numprint{0.3251} & \bf\nprounddigits{2}\numprint{0.3251}\\
 \verb|ca-CondMat| & \numprint{23133} & \numprint{93439} & \numprint{62} & \nprounddigits{2}\numprint{1.74} & \bf\numprint{0} & \bf\nprounddigits{2}\numprint{0.1003} & \bf\nprounddigits{2}\numprint{0.1003}\\
 \verb|ca-GrQc| & \numprint{5242} & \numprint{14484} & \numprint{9} & \nprounddigits{2}\numprint{0.15} & \bf\numprint{0} & \bf\nprounddigits{2}\numprint{0.0171} & \bf\nprounddigits{2}\numprint{0.0171}\\
 \verb|ca-HepPh| & \numprint{12008} & \numprint{118489} & \numprint{491} & -- & \bf\numprint{0} & \bf\nprounddigits{2}\numprint{0.1585} & \bf\nprounddigits{2}\numprint{0.1585}\\
 \verb|ca-HepTh| & \numprint{9877} & \numprint{25973} & \numprint{176} & -- & \bf\numprint{0} & \bf\nprounddigits{2}\numprint{0.0316} & \bf\nprounddigits{2}\numprint{0.0316}\\
 \verb|cnr-2000| & \numprint{325557} & \numprint{2738969} & \numprint{755617} & -- & \bf\numprint{23880} & -- & \bf\nprounddigits{2}\numprint{10727.2888}\\
 \verb|dblp-2010| & \numprint{326186} & \numprint{807700} & \numprint{868} & -- & \bf\numprint{0} & \bf\nprounddigits{2}\numprint{1.9773} & \bf\nprounddigits{2}\numprint{1.9773}\\
 \verb|dblp-2011| & \numprint{986324} & \numprint{3353618} & \numprint{8898} & -- & \bf\numprint{50} & \bf\nprounddigits{2}\numprint{9.1293} & \nprounddigits{2}\numprint{9.8761}\\
 \verb|email-EuAll| & \numprint{265214} & \numprint{364481} & \numprint{20648} & -- & \bf\numprint{5064} & -- & \bf\nprounddigits{2}\numprint{6.9919}\\
 \verb|eu-2005| & \numprint{862664} & \numprint{16138468} & \numprint{5555826} & -- & \bf\numprint{536209} & -- & \bf\nprounddigits{2}\numprint{12966.5851}\\
 \verb|p2p-Gnutella04| & \numprint{10876} & \numprint{39994} & \bf\numprint{0} & \nprounddigits{2}\numprint{0.34} & \bf\numprint{0} & \nprounddigits{2}\numprint{0.0549}$^*$\hspace*{-4pt} & \nprounddigits{2}\numprint{0.0549}$^*$\hspace*{-4pt}\\
 \verb|p2p-Gnutella05| & \numprint{8846} & \numprint{31839} & \bf\numprint{0} & \nprounddigits{2}\numprint{0.23} & \bf\numprint{0} & \nprounddigits{2}\numprint{0.0461}$^*$\hspace*{-4pt} & \nprounddigits{2}\numprint{0.0461}$^*$\hspace*{-4pt}\\
 \verb|p2p-Gnutella06| & \numprint{8717} & \numprint{31525} & \bf\numprint{0} & \nprounddigits{2}\numprint{0.33} & \bf\numprint{0} & \nprounddigits{2}\numprint{0.0431}$^*$\hspace*{-4pt} & \nprounddigits{2}\numprint{0.0431}$^*$\hspace*{-4pt}\\
 \verb|p2p-Gnutella08| & \numprint{6301} & \numprint{20777} & \numprint{261} & -- & \bf\numprint{17} & \bf\nprounddigits{2}\numprint{0.0377} & \bf\nprounddigits{2}\numprint{0.0626}\\
 \verb|p2p-Gnutella09| & \numprint{8114} & \numprint{26013} & \numprint{214} & -- & \bf\numprint{5} & \bf\nprounddigits{2}\numprint{0.0432} & \bf\nprounddigits{2}\numprint{0.0762}\\
 \verb|p2p-Gnutella24| & \numprint{26518} & \numprint{65369} & \bf\numprint{0} & \nprounddigits{2}\numprint{0.91} & \bf\numprint{0} & \nprounddigits{2}\numprint{0.0960}$^*$\hspace*{-4pt} & \nprounddigits{2}\numprint{0.0960}$^*$\hspace*{-4pt}\\
 \verb|p2p-Gnutella25| & \numprint{22687} & \numprint{54705} & \bf\numprint{0} & \nprounddigits{2}\numprint{0.63} & \bf\numprint{0} & \nprounddigits{2}\numprint{0.0755}$^*$\hspace*{-4pt} & \nprounddigits{2}\numprint{0.0755}$^*$\hspace*{-4pt}\\
 \verb|p2p-Gnutella30| & \numprint{36682} & \numprint{88328} & \bf\numprint{0} & \nprounddigits{2}\numprint{1.27} & \bf\numprint{0} & \nprounddigits{2}\numprint{0.0926}$^*$\hspace*{-4pt} & \nprounddigits{2}\numprint{0.0926}$^*$\hspace*{-4pt}\\
 \verb|p2p-Gnutella31| & \numprint{62586} & \numprint{147892} & \bf\numprint{0} & \nprounddigits{2}\numprint{2.14} & \bf\numprint{0} & \nprounddigits{2}\numprint{0.2316}$^*$\hspace*{-4pt} & \nprounddigits{2}\numprint{0.2316}$^*$\hspace*{-4pt}\\
 \verb|roadNet-CA| & \numprint{1965206} & \numprint{2766607} & \bf\numprint{0} & \nprounddigits{2}\numprint{115.17} & \bf\numprint{0} & \nprounddigits{2}\numprint{5.6046}$^*$\hspace*{-4pt} & \nprounddigits{2}\numprint{5.6046}$^*$\hspace*{-4pt}\\
 \verb|roadNet-PA| & \numprint{1088092} & \numprint{1541898} & \bf\numprint{0} & \nprounddigits{2}\numprint{45.75} & \bf\numprint{0} & \nprounddigits{2}\numprint{2.9422}$^*$\hspace*{-4pt} & \nprounddigits{2}\numprint{2.9422}$^*$\hspace*{-4pt}\\
 \verb|roadNet-TX| & \numprint{1379917} & \numprint{1921660} & \bf\numprint{0} & \nprounddigits{2}\numprint{73.21} & \bf\numprint{0} & \nprounddigits{2}\numprint{3.6423}$^*$\hspace*{-4pt} & \nprounddigits{2}\numprint{3.6423}$^*$\hspace*{-4pt}\\
 \verb|web-BerkStan| & \numprint{685230} & \numprint{6649470} & \numprint{2096936} & -- & \bf\numprint{152581} & -- & \bf\nprounddigits{2}\numprint{6753.2748}\\
 \verb|web-Google| & \numprint{875713} & \numprint{4322051} & \numprint{266455} & -- & \bf\numprint{16440} & -- & \bf\nprounddigits{2}\numprint{35.5795}\\
 \verb|web-NotreDame| & \numprint{325729} & \numprint{1090108} & \numprint{98861} & -- & \bf\numprint{14553} & -- & \bf\nprounddigits{2}\numprint{20.1006}\\
 \verb|web-Stanford| & \numprint{281903} & \numprint{1992636} & \numprint{523480} & -- & \bf\numprint{57463} & -- & \bf\nprounddigits{2}\numprint{981.8175}\\
 \verb|zhishi-hudong-int| & \numprint{1984484} & \numprint{14428382} & \numprint{1175068} & -- & \bf\numprint{26536} & -- & \bf\nprounddigits{2}\numprint{568.2578}\\

\bottomrule
\end{tabular}
\end{center}
\end{table*}

\subsection{Solving Remaining Instances Heuristically}
We now look at the instances that could not be solved in the 24-hour time limit by any exact method. The results are presented in Table~\ref{tab:inexact}. Nine instances were reduced to VCC within the time limit of 24 hours, and the remaining instances were too large to finish in the time limit (not in the table). After fully transforming the input ECC instance to a reduced VCC instance, we ran the iterated greedy approach \textsf{IG} due to Chalupa et al.~\cite{chalupa2016const}, which we call \textsf{Redu$^3$IG}, and compare its best solution with a lower bound from \textsf{KaMIS}, a state-of-the-art evolutionary algorithm for finding near-maximum independent sets on huge networks~\cite{lamm2017finding}. Four instances were solved to within 300 vertices of optimum, two of which (\verb|soc-Slashdot0811| and \verb|soc-Slashdot0902|) are within 100 vertices. The remaining instances are solved to within \numprint{6000} vertices of optimum. 

\begin{table*}
\small
\caption{Heuristic solutions for graphs that could not be solved exactly in 24 hours. 
`lb' is a lower bound on $\theta_E(G)$ from \textsf{KaMIS}, `ub' is the smallest clique cover computed by \textsf{Redu$^3$IG}, and `Time' is the time in seconds for \textsf{Redu$^3$IG} to reach this result.}
\label{tab:inexact}
\begin{center}
\setlength{\tabcolsep}{.7ex}
\begin{tabular}{lrr@{\hskip 3pt} r@{\hskip 10pt} rr@{\hskip 10pt} r}
\toprule
\multicolumn{3}{c}{Graph}  & \multicolumn{1}{c}{\textsf{KaMIS}} & \multicolumn{2}{c}{\textsf{Redu$^3$IG}}\\
\cmidrule(r){1-3} \cmidrule(r){4-4} \cmidrule(r){5-6}
	Name&	$n$&	$m$	&lb & ub & Time (s)\\\hline
\verb|as-skitter| & \numprint{1696415} & \numprint{11095298} & \numprint{5843072} & \numprint{5847591} & \nprounddigits{2}\numprint{20848.1689} \\
\verb|email-Enron| & \numprint{36692} & \numprint{183831} & \numprint{42141} & \numprint{42207} & \nprounddigits{2}\numprint{2200.9986} \\
\verb|soc-Epinions1| & \numprint{75879} & \numprint{405740} & \numprint{185544} & \numprint{186384} & \nprounddigits{2}\numprint{18064.7851} \\
\verb|soc-pokec-relationships| & \numprint{1632803} & \numprint{22301964} & \numprint{12222248} & \numprint{12227949} & \nprounddigits{2}\numprint{21451.9118} \\
\verb|soc-Slashdot0811| & \numprint{77360} & \numprint{469180} & \numprint{328018} & \numprint{328079} & \nprounddigits{2}\numprint{3073.7503} \\
\verb|soc-Slashdot0902| & \numprint{82168} & \numprint{504230} & \numprint{351012} & \numprint{351072} & \nprounddigits{2}\numprint{3125.2094} \\
\verb|wiki-Talk| & \numprint{2394385} & \numprint{4659565} & \numprint{3645692} & \numprint{3648312} & \nprounddigits{2}\numprint{21088.5298} \\
\verb|wiki-Vote| & \numprint{7115} & \numprint{100762} & \numprint{34789} & \numprint{35004} & \nprounddigits{2}\numprint{21424.4757} \\
\verb|zhishi-baidu-relatedpages| & \numprint{415641} & \numprint{2374044} & \numprint{1372941} & \numprint{1373912} & \nprounddigits{2}\numprint{9988.9979} \\
\bottomrule
\end{tabular}
\end{center}
\end{table*}

\subsection{Summarizing the Quality of Existing Heuristic Solvers}
Finally, using our exact results, we evaluate the quality of two heuristic solvers designed for large sparse graphs. We compare \textsf{Conte}, an algorithm by Conte et al.~\cite{conte-2020} designed for large sparse graphs and \textsf{EO-ECC} by Abdullah et al.~\cite{abdullah-2022}.
We run \textsf{Conte} and \textsf{EO-ECC} on all instances that were solved exactly (i.e., those from Table~\ref{tab:exact}). The results are presented in Table~\ref{tab:verify}.

From among the 27 graphs, \textsf{Conte} solves five instances exactly. A further nine instances are solved within 50 cliques of optimal, and eight additional graphs are solved within \numprint{2000} of optimal. \textsf{EO-ECC}, on the other hand, solves eight instances exactly (a superset of \textsf{Conte}'s five) and solves these faster than \textsf{Conte}. Furthermore, \textsf{EO-ECC} finds 14 smaller solutions faster than \textsf{Conte} (\textsf{Conte} only finds four smaller solutions faster). However, a distinct negative is \textsf{EO-ECC}'s running time and solution quality on \verb|cnr-2000|, \verb|eu-2005|, and \verb|web-BerkStan|, which is much worse than \textsf{Conte}. We conclude that \textsf{Conte} gives consistently fast results with reasonable solutions, and \textsf{EO-ECC} is sometimes very fast and accurate, and other times not. 

\begin{table}[!h]
\small
\caption{Evaluation of the quality of heuristic solvers \textsf{Conte} and \textsf{EO-ECC} on all graphs with known edge clique cover number $\theta_E(G)$. `ub' is the solution found by the given algorithm, and `Time' is the algorithm's time in seconds. Values of `ub' marked in \textbf{bold} indicates the algorithm found an optimal solution, with its time in \textbf{bold} if it did so faster than its competitor. Values of `ub' in \emph{italics} indicate that an algorithm found an ECC smaller than its competitor, with its time in \emph{italics} if it did so faster than its competitor.}
\label{tab:verify}
\begin{center}
\setlength{\tabcolsep}{.7ex}
\begin{tabular}{lrrr@{\hskip 4pt} rr@{\hskip 4pt}rr}
\toprule
\multicolumn{4}{c}{Graph $G$}  & \multicolumn{2}{c}{\textsf{Conte}}& \multicolumn{2}{c}{\textsf{EO-ECC}}\\
\cmidrule(r){1-4} \cmidrule(r){5-6} \cmidrule(r){7-8}
	Name&	$n$&	$m$	&$\theta_E(G)$& ub & Time\,(s)& ub & Time\,(s)\\\hline
\verb|ca-AstroPh| & \numprint{18772} & \numprint{198050} & \numprint{15134} & \numprint{15481} & \nprounddigits{2}\numprint{0.916}  & \it\numprint{15373} & \it\nprounddigits{2}\numprint{0.5} \\
\verb|ca-CondMat| & \numprint{23133} & \numprint{93439} & \numprint{16283} & \numprint{16378} & \nprounddigits{2}\numprint{0.537}  & \it\numprint{16307} & \it\nprounddigits{2}\numprint{0.07} \\
\verb|ca-GrQc| & \numprint{5242} & \numprint{14484} & \numprint{3737} & \numprint{3749} & \nprounddigits{2}\numprint{0.152}  & \it\numprint{3739} & \it\nprounddigits{2}\numprint{0.01} \\
\verb|ca-HepPh| & \numprint{12008} & \numprint{118489} & \numprint{10031} & \numprint{10142} & \nprounddigits{2}\numprint{0.694}  & \it\numprint{10097} & \it\nprounddigits{2}\numprint{0.35} \\
\verb|ca-HepTh| & \numprint{9877} & \numprint{25973} & \numprint{9190} & \numprint{9264} & \nprounddigits{2}\numprint{0.188}  & \it\numprint{9212} & \it\nprounddigits{2}\numprint{0.02} \\
\verb|cnr-2000| & \numprint{325557} & \numprint{2738969} & \numprint{752118} & \it\numprint{756905} & \it\nprounddigits{2}\numprint{14.917}  & \numprint{763365} & \nprounddigits{2}\numprint{2820.97} \\
\verb|dblp-2010| & \numprint{326186} & \numprint{807700} & \numprint{186834} & \numprint{187395} & \nprounddigits{2}\numprint{2.223}  & \it\numprint{186968} & \it\nprounddigits{2}\numprint{0.44} \\
\verb|dblp-2011| & \numprint{986324} & \numprint{3353618} & \numprint{707773} & \numprint{713219} & \nprounddigits{2}\numprint{13.555}  & \it\numprint{709156} & \it\nprounddigits{2}\numprint{3.48} \\
\verb|email-EuAll| & \numprint{265214} & \numprint{364481} & \numprint{297092} & \it\numprint{298943} & \nprounddigits{2}\numprint{2.575}  & \numprint{299257} & \nprounddigits{2}\numprint{2.14} \\
\verb|eu-2005| & \numprint{862664} & \numprint{16138468} & \numprint{2832059} & \numprint{2883585} & \nprounddigits{2}\numprint{108.673}  & \numprint{3032337} & \nprounddigits{2}\numprint{8458.21} \\
\verb|p2p-Gnutella04| & \numprint{10876} & \numprint{39994} & \numprint{38491} & \bf\numprint{38491} & \nprounddigits{2}\numprint{0.291}  & \bf\numprint{38491} & \bf\nprounddigits{2}\numprint{0.04} \\
\verb|p2p-Gnutella05| & \numprint{8846} & \numprint{31839} & \numprint{30523} & \numprint{30527} & \nprounddigits{2}\numprint{0.251}  & \it\numprint{30525} & \it\nprounddigits{2}\numprint{0.04} \\
\verb|p2p-Gnutella06| & \numprint{8717} & \numprint{31525} & \numprint{30322} & \numprint{30327} & \nprounddigits{2}\numprint{0.264}  & \it\numprint{30324} & \it\nprounddigits{2}\numprint{0.04} \\
\verb|p2p-Gnutella08| & \numprint{6301} & \numprint{20777} & \numprint{19000} & \numprint{19042} & \nprounddigits{2}\numprint{0.198}  & \it\numprint{19012} & \it\nprounddigits{2}\numprint{0.03} \\
\verb|p2p-Gnutella09| & \numprint{8114} & \numprint{26013} & \numprint{24117} & \numprint{24150} & \nprounddigits{2}\numprint{0.244}  & \it\numprint{24133} & \it\nprounddigits{2}\numprint{0.03} \\
\verb|p2p-Gnutella24| & \numprint{26518} & \numprint{65369} & \numprint{63725} & \numprint{63726} & \nprounddigits{2}\numprint{0.412}  & \bf\numprint{63725} & \bf\nprounddigits{2}\numprint{0.06} \\
\verb|p2p-Gnutella25| & \numprint{22687} & \numprint{54705} & \numprint{53367} & \bf\numprint{53367} & \nprounddigits{2}\numprint{0.325}  & \bf\numprint{53367} & \bf\nprounddigits{2}\numprint{0.05} \\
\verb|p2p-Gnutella30| & \numprint{36682} & \numprint{88328} & \numprint{85821} & \bf\numprint{85823} & \nprounddigits{2}\numprint{0.515}  & \bf\numprint{85821} & \bf\nprounddigits{2}\numprint{0.1} \\
\verb|p2p-Gnutella31| & \numprint{62586} & \numprint{147892} & \numprint{144478} & \bf\numprint{144478} & \nprounddigits{2}\numprint{0.827}  & \bf\numprint{144478} & \bf\nprounddigits{2}\numprint{0.15} \\
\verb|roadNet-CA| & \numprint{1965206} & \numprint{2766607} & \numprint{2537936} & \numprint{2537945} & \nprounddigits{2}\numprint{17.902}  & \bf\numprint{2537936} & \bf\nprounddigits{2}\numprint{1.02} \\
\verb|roadNet-PA| & \numprint{1088092} & \numprint{1541898} & \numprint{1413370} & \bf\numprint{1413370} & \nprounddigits{2}\numprint{10.618}  & \bf\numprint{1413370} & \bf\nprounddigits{2}\numprint{0.69} \\
\verb|roadNet-TX| & \numprint{1379917} & \numprint{1921660} & \numprint{1763295} & \numprint{1763298} & \nprounddigits{2}\numprint{13.482}  & \bf\numprint{1763295} & \bf\nprounddigits{2}\numprint{0.89} \\
\verb|web-BerkStan| & \numprint{685230} & \numprint{6649470} & \numprint{1834074} & \it\numprint{1850605} & \it\nprounddigits{2}\numprint{54.344}  & \numprint{1903872} & \nprounddigits{2}\numprint{2089.25} \\
\verb|web-Google| & \numprint{875713} & \numprint{4322051} & \numprint{1242770} & \numprint{1254107} & \nprounddigits{2}\numprint{24.961}  & \it\numprint{1251672} & \nprounddigits{2}\numprint{33.1} \\
\verb|web-NotreDame| & \numprint{325729} & \numprint{1090108} & \numprint{451424} & \numprint{453864} & \nprounddigits{2}\numprint{7.085}  & \it\numprint{453805} & \nprounddigits{2}\numprint{7.31} \\
\verb|web-Stanford| & \numprint{281903} & \numprint{1992636} & \numprint{562417} & \it\numprint{570958} & \it\nprounddigits{2}\numprint{16.848}  & \numprint{591957} & \nprounddigits{2}\numprint{326.92} \\
\verb|zhishi-hudong-int| & \numprint{1984484} & \numprint{14428382} & \numprint{10557244} & \numprint{10698424} & \nprounddigits{2}\numprint{123.448}  & \it\numprint{10678121} & \nprounddigits{2}\numprint{322.89} \\
\midrule
\multicolumn{4}{l}{Summary ({\bf\#optimal} / {\it\#smaller and faster})}  & \multicolumn{2}{c}{({\bf 5} / {\it 4})}& \multicolumn{2}{c}{({\bf 8} / {\it 14})}\\
\bottomrule
\end{tabular}
\end{center}
\end{table}

\section{Conclusion and Future Work}
We introduced a technique to further reduce ECC problem instances via VCC data reductions, enabling us to solve sparse real-world graphs that could not be solved before. Critical to this technique is the ability to transform reduced ECC instances to the VCC problem, through a modification of the polynomial-time reduction of Kou et al.~\cite{10.1145/359340.359346}. The combined reduction power of ECC and VCC reductions, which we call \emph{synergistic} data reduction, produces significantly smaller kernels than ECC reductions alone. Of particular interest for future work is integrating data reduction rules with existing heuristic algorithms for the ECC problem, trying to implement a more efficient LP relaxation ECC reduction without a transformation, and to see if folding-based reductions can be lifted to the ECC problem.

\clearpage
\bibliography{ecc}

\clearpage
\appendix

\section{Appendix: Additional ECC Reductions from Gramm et al.}
\label{appendix:gramm}

ECC Reduction~\ref{red:gramm_3} uses the notion of prisoners and exits. For a vertex $v$, the neighbors $p$ with $N(p) \subset N(v)$ are called \emph{prisoners} and the remaining neighbors $x$ with $N(x) \setminus N(v)= \emptyset$ \emph{exits}. We say that the prisoners \emph{dominate} the exits if every exit $x$ has an adjacent prisoner.

\setcounter{eccreduction}{2}
\begin{eccreduction}[\cite{gramm-2009}]
\label{red:gramm_3}
Let $v\in V'$ have at least one prisoner. If each prisoner is adjacent to at least one vertex other than $v$ via an uncovered edge, and every exit has an adjacent prisoner, then delete $v$. To reconstruct a solution for the unreduced instance, add $v$ to every clique containing a prisoner of $v$.
\end{eccreduction}

As noted by Gramm et al.~\cite{gramm-2009}, ECC Reduction~\ref{red:gramm_3} can be applied in $O(n^3)$ by using an edge list representation by testing each vertex $v$ for adjacencies between its prisoners and exits. If using an adjacency list instead, the reduction can be applied to a vertex in $O(\Delta^2)$ time where the maximum degree $\Delta$ is small, reducing to $O(\Delta^2 n)$ overall time. However, a severely limiting restriction is that the rule requires ECC Reduction~\ref{red:gramm_1} and~\ref{red:gramm_2} to have been exhaustively applied first, increasing the overall running time dramatically to $O(m^2)$.

\begin{eccreduction}[\cite{gramm-2009}]
\label{red:gramm_4}
Let $N'(v)$ be the uncovered neighbors of $v$ and $C_1, C_2$, $\ldots$, $C_k$ be the connected components of $G[N'(v)]$. If $k > 1$, then replace $v$ with $v_1, v_2, \ldots, v_k$. For $1\leq i \leq k$, add uncovered edges from $v_i$ to each vertex in $C_i$, and for each covered edge $\{u, v\}$ incident to $v$, add covered edge $\{u, v_i\}$ for all $1\leq i \leq k$.
\end{eccreduction}

ECC Reduction~\ref{red:gramm_4} is expensive and, while it is applied often on denser instances in experiments by Gramm et al.~\cite{gramm-2009}, it is rarely applied on the sparsest instances, and it only moderately improves the running time of branch and reduce.

\clearpage

\section{Appendix: Full Graph Statistics for Erd\H{o}s-R\'enyi Graphs}
\label{app:full}

\small
\begin{table}[!h]
\begin{center}
\caption{Experimental results with exact branch-and-reduce algorithms \textsf{Gramm} and \textsf{Redu$^3$BnR} on small Erd\H{o}s-R\'enyi graphs in varying density.}
\label{tab:er-full2}
\setlength{\tabcolsep}{.7ex}
\begin{tabular}{rrr@{\hskip 3pt} rr@{\hskip 10pt} rr}
\toprule
\multicolumn{3}{c}{Graph}  & \multicolumn{2}{c}{\textsf{Gramm}} & \multicolumn{2}{c}{\textsf{Redu$^3$BnR}}\\
\cmidrule(r){1-3} \cmidrule(r){4-5} \cmidrule(r){6-7}                          
	$n$&	$p$&	$m$	&Kernel & Time (s) & Kernel & Time (s)\\\hline
64 &\nprounddigits{3}\numprint{0.1} & 95 & 0 & $<$\nprounddigits{2}\numprint{0.01} & 0 &$<$\nprounddigits{2}\numprint{0.01} \\
64 &\nprounddigits{3}\numprint{0.1} & 101 & 0 & $<$\nprounddigits{2}\numprint{0.01} & 0 &$<$\nprounddigits{2}\numprint{0.01} \\
64 &\nprounddigits{3}\numprint{0.1} & 103 & 0 & $<$\nprounddigits{2}\numprint{0.01} & 0 &$<$\nprounddigits{2}\numprint{0.01} \\
64 &\nprounddigits{3}\numprint{0.1} & 95 & 0 & $<$ \nprounddigits{2}\numprint{0.01} & 0 &$<$\nprounddigits{2}\numprint{0.01} \\
64 &\nprounddigits{3}\numprint{0.1} & 94 & 0 & $<$\nprounddigits{2}\numprint{0.01} & 0 &$<$\nprounddigits{2}\numprint{0.01} \\
64 &\nprounddigits{3}\numprint{0.15} & 147 & 0 & $<$\nprounddigits{2}\numprint{0.01} & 0 &$<$\nprounddigits{2}\numprint{0.01} \\
64 &\nprounddigits{3}\numprint{0.15} & 151 & 2 & $<$\nprounddigits{2}\numprint{0.01} & 0 &$<$\nprounddigits{2}\numprint{0.01} \\
64 &\nprounddigits{3}\numprint{0.15} & 159 & 0 & $<$\nprounddigits{2}\numprint{0.01} & 0 &$<$\nprounddigits{2}\numprint{0.01} \\
64 &\nprounddigits{3}\numprint{0.15} & 152 & 0 & $<$\nprounddigits{2}\numprint{0.01} & 0 &$<$\nprounddigits{2}\numprint{0.01} \\
64 &\nprounddigits{3}\numprint{0.15} & 145 & 0 & $<$\nprounddigits{2}\numprint{0.01} & 0 &$<$\nprounddigits{2}\numprint{0.01} \\
64 &\nprounddigits{3}\numprint{0.2} & 197 & 27 &\nprounddigits{2}\numprint{2.23} & 0 &$<$\nprounddigits{2}\numprint{0.01} \\
64 &\nprounddigits{3}\numprint{0.2} & 211 & 55 & -- & 5 &$<$\nprounddigits{2}\numprint{0.01} \\
64 &\nprounddigits{3}\numprint{0.2} & 206 & 55 & -- & 41 &$<$\nprounddigits{2}\numprint{0.01} \\
64 &\nprounddigits{3}\numprint{0.2} & 208 & 61 & -- & 0 &$<$\nprounddigits{2}\numprint{0.01} \\
64 &\nprounddigits{3}\numprint{0.2} & 189 & 51 &\nprounddigits{2}\numprint{2646.81} & 0 &$<$\nprounddigits{2}\numprint{0.01} \\
128 &\nprounddigits{3}\numprint{0.1} & 403 & 0 & $<$\nprounddigits{2}\numprint{0.01} & 0 &$<$\nprounddigits{2}\numprint{0.01} \\
128 &\nprounddigits{3}\numprint{0.1} & 407 & 0 & $<$\nprounddigits{2}\numprint{0.01} & 0 &$<$\nprounddigits{2}\numprint{0.01} \\
128 &\nprounddigits{3}\numprint{0.1} & 400 & 0 & $<$\nprounddigits{2}\numprint{0.01} & 0 &$<$\nprounddigits{2}\numprint{0.01} \\
128 &\nprounddigits{3}\numprint{0.1} & 399 & 0 & $<$\nprounddigits{2}\numprint{0.01} & 0 &$<$\nprounddigits{2}\numprint{0.01} \\
128 &\nprounddigits{3}\numprint{0.1} & 407 & 0 &\nprounddigits{2}\numprint{0.01} & 0 &$<$\nprounddigits{2}\numprint{0.01} \\
128 &\nprounddigits{3}\numprint{0.15} & 598 & 224 & -- & 5 &$<$\nprounddigits{2}\numprint{0.01} \\
128 &\nprounddigits{3}\numprint{0.15} & 610 & 255 & -- & 91 &$<$\nprounddigits{2}\numprint{0.01} \\
128 &\nprounddigits{3}\numprint{0.15} & 608 & 223 & -- & 0 &$<$\nprounddigits{2}\numprint{0.01} \\
128 &\nprounddigits{3}\numprint{0.15} & 622 & 295 & -- & 158 &\nprounddigits{2}\numprint{0.1013} \\
128 &\nprounddigits{3}\numprint{0.15} & 609 & 228 & -- & 0 &$<$\nprounddigits{2}\numprint{0.01} \\
256 &\nprounddigits{3}\numprint{0.075} & 1230 & 8 &\nprounddigits{2}\numprint{0.02} & 0 &\nprounddigits{2}\numprint{0.0108} \\
256 &\nprounddigits{3}\numprint{0.075} & 1220 & 52 &\nprounddigits{2}\numprint{0.03} & 0 &\nprounddigits{2}\numprint{0.0102} \\
256 &\nprounddigits{3}\numprint{0.075} & 1208 & 38 &\nprounddigits{2}\numprint{0.02} & 0 &$<$\nprounddigits{2}\numprint{0.01} \\
256 &\nprounddigits{3}\numprint{0.075} & 1207 & 0 &\nprounddigits{2}\numprint{0.02} & 0 &$<$\nprounddigits{2}\numprint{0.01} \\
256 &\nprounddigits{3}\numprint{0.075} & 1220 & 6 &\nprounddigits{2}\numprint{0.01} & 0 &$<$\nprounddigits{2}\numprint{0.01} \\
256 &\nprounddigits{3}\numprint{0.1} & 1640 & 587 & -- & 28 &\nprounddigits{2}\numprint{0.0236} \\
256 &\nprounddigits{3}\numprint{0.1} & 1624 & 522 & -- & 0 &\nprounddigits{2}\numprint{0.0247} \\
256 &\nprounddigits{3}\numprint{0.1} & 1618 & 541 & -- & 24 &\nprounddigits{2}\numprint{0.0209} \\
256 &\nprounddigits{3}\numprint{0.1} & 1636 & 548 & -- & 5 &\nprounddigits{2}\numprint{0.0186} \\
256 &\nprounddigits{3}\numprint{0.1} & 1646 & 561 & -- & 0 &\nprounddigits{2}\numprint{0.0212} \\
\bottomrule
\end{tabular}
\end{center}
\end{table}

\begin{table}
\begin{center}
\caption{Experimental results with exact branch-and-reduce algorithms \textsf{Gramm} and \textsf{Redu$^3$BnR} on small Erd\H{o}s-R\'enyi graphs in varying density.}
\label{tab:er-full}
\setlength{\tabcolsep}{.7ex}
\begin{tabular}{rrr@{\hskip 3pt} rr@{\hskip 10pt} rr}
\toprule
\multicolumn{3}{c}{Graph}  & \multicolumn{2}{c}{\textsf{Gramm}} & \multicolumn{2}{c}{\textsf{Redu$^3$BnR}}\\
\cmidrule(r){1-3} \cmidrule(r){4-5} \cmidrule(r){6-7}                          
$n$& $p$& $m$ &Kernel & Time (s) & Kernel & Time (s)\\\hline
512 &\nprounddigits{3}\numprint{0.05} & 3288 & 107 &\nprounddigits{2}\numprint{0.68} & 0 &\nprounddigits{2}\numprint{0.0312} \\
512 &\nprounddigits{3}\numprint{0.05} & 3274 & 115 &\nprounddigits{2}\numprint{0.16} & 0 &\nprounddigits{2}\numprint{0.0254} \\
512 &\nprounddigits{3}\numprint{0.05} & 3290 & 11 &\nprounddigits{2}\numprint{0.05} & 0 &\nprounddigits{2}\numprint{0.0199} \\
512 &\nprounddigits{3}\numprint{0.05} & 3284 & 69 &\nprounddigits{2}\numprint{0.08} & 0 &\nprounddigits{2}\numprint{0.025} \\
512 &\nprounddigits{3}\numprint{0.05} & 3259 & 39 &\nprounddigits{2}\numprint{0.06} & 5 &\nprounddigits{2}\numprint{0.0221} \\
512 &\nprounddigits{3}\numprint{0.065} & 4279 & 1167 & -- & 0 &\nprounddigits{2}\numprint{0.0399} \\
512 &\nprounddigits{3}\numprint{0.065} & 4268 & 1165 & -- & 15 &\nprounddigits{2}\numprint{0.04} \\
512 &\nprounddigits{3}\numprint{0.065} & 4262 & 1220 & -- & 0 &\nprounddigits{2}\numprint{0.0467} \\
512 &\nprounddigits{3}\numprint{0.065} & 4261 & 1100 & -- & 5 &\nprounddigits{2}\numprint{0.0425} \\
512 &\nprounddigits{3}\numprint{0.065} & 4216 & 1048 & -- & 5 &\nprounddigits{2}\numprint{0.0297} \\
1024 &\nprounddigits{3}\numprint{0.036} & 9410 & 578 &\nprounddigits{2}\numprint{0.71} & 5 &\nprounddigits{2}\numprint{0.0631} \\
1024 &\nprounddigits{3}\numprint{0.036} & 9433 & 525 &\nprounddigits{2}\numprint{0.74} & 0 &\nprounddigits{2}\numprint{0.0564} \\
1024 &\nprounddigits{3}\numprint{0.036} & 9314 & 389 &\nprounddigits{2}\numprint{0.48} & 0 &\nprounddigits{2}\numprint{0.0754} \\
1024 &\nprounddigits{3}\numprint{0.036} & 9520 & 609 &\nprounddigits{2}\numprint{3.28} & 5 &\nprounddigits{2}\numprint{0.0758} \\
1024 &\nprounddigits{3}\numprint{0.036} & 9366 & 496 &\nprounddigits{2}\numprint{1.07} & 0 &\nprounddigits{2}\numprint{0.058} \\
1024 &\nprounddigits{3}\numprint{0.0365} & 9524 & 667 &\nprounddigits{2}\numprint{457.86} & 5 &\nprounddigits{2}\numprint{0.0771} \\
1024 &\nprounddigits{3}\numprint{0.0365} & 9552 & 644 &\nprounddigits{2}\numprint{1.26} & 0 &\nprounddigits{2}\numprint{0.0739} \\
1024 &\nprounddigits{3}\numprint{0.0365} & 9459 & 519 &\nprounddigits{2}\numprint{0.51} & 0 &\nprounddigits{2}\numprint{0.0876} \\
1024 &\nprounddigits{3}\numprint{0.0365} & 9651 & 725 & -- & 15 &\nprounddigits{2}\numprint{0.0739} \\
1024 &\nprounddigits{3}\numprint{0.0365} & 9497 & 589 & -- & 0 &\nprounddigits{2}\numprint{0.0734} \\
1024 &\nprounddigits{3}\numprint{0.0375} & 9776 & 847 & -- & 5 &\nprounddigits{2}\numprint{0.0641} \\
1024 &\nprounddigits{3}\numprint{0.0375} & 9803 & 908 & -- & 0 &\nprounddigits{2}\numprint{0.0556} \\
1024 &\nprounddigits{3}\numprint{0.0375} & 9735 & 765 & -- & 0 &\nprounddigits{2}\numprint{0.0584} \\
1024 &\nprounddigits{3}\numprint{0.0375} & 9929 & 985 & -- & 15 &\nprounddigits{2}\numprint{0.0699} \\
1024 &\nprounddigits{3}\numprint{0.0375} & 9751 & 754 & -- & 0 &\nprounddigits{2}\numprint{0.0842} \\
2048 &\nprounddigits{3}\numprint{0.025} & 26156 & 1591 &\nprounddigits{2}\numprint{6.85} & 15 &\nprounddigits{2}\numprint{0.1898} \\
2048 &\nprounddigits{3}\numprint{0.025} & 26128 & 1588 &\nprounddigits{2}\numprint{5.24} & 0 &\nprounddigits{2}\numprint{0.1741} \\
2048 &\nprounddigits{3}\numprint{0.025} & 26136 & 1618 &\nprounddigits{2}\numprint{5.1} & 5 &\nprounddigits{2}\numprint{0.1663} \\
2048 &\nprounddigits{3}\numprint{0.025} & 26159 & 1469 &\nprounddigits{2}\numprint{3.97} & 0 &\nprounddigits{2}\numprint{0.1961} \\
2048 &\nprounddigits{3}\numprint{0.025} & 26035 & 1600 &\nprounddigits{2}\numprint{4.73} & 0 &\nprounddigits{2}\numprint{0.1811} \\
2048 &\nprounddigits{3}\numprint{0.0275} & 28741 & 3706 & -- & 15 &\nprounddigits{2}\numprint{0.1913} \\
2048 &\nprounddigits{3}\numprint{0.0275} & 28745 & 3474 & -- & 0 &\nprounddigits{2}\numprint{0.1885} \\
2048 &\nprounddigits{3}\numprint{0.0275} & 28807 & 3654 & -- & 0 &\nprounddigits{2}\numprint{0.1943} \\
2048 &\nprounddigits{3}\numprint{0.0275} & 28818 & 3663 & -- & 10 &\nprounddigits{2}\numprint{0.2064} \\
2048 &\nprounddigits{3}\numprint{0.0275} & 28612 & 3590 & -- & 0 &\nprounddigits{2}\numprint{0.2047} \\
\bottomrule
\end{tabular}
\end{center}
\end{table}

\end{document}